\documentclass[11pt,a4paper]{article}
\pdfoutput=1
\usepackage{jheppub}
\usepackage{amsmath,amssymb,mathtools,bm,amsthm}
\usepackage{graphicx,booktabs,comment,appendix}
\usepackage[utf8]{inputenc}
\usepackage{hyperref}
\usepackage{cleveref}
\usepackage{tikz}
\usepackage{pgfplots}
\pgfplotsset{compat=1.18}
\usetikzlibrary{arrows.meta,positioning,calc,decorations.pathmorphing,shapes.geometric}
\hypersetup{colorlinks=true,citecolor=blue,linkcolor=blue,urlcolor=blue}

\newcommand{\Tr}{\operatorname{Tr}}
\newcommand{\cR}{\mathcal R}

\newcommand{\cT}{\mathcal T}

\newcommand{\cO}{\mathcal O}
\newcommand{\dd}{\,\mathrm d}

\newtheorem{theorem}{Theorem}[section]
\newtheorem{lemma}[theorem]{Lemma}
\newtheorem{corollary}[theorem]{Corollary}

\begin{document}

\title{Overcrowding and the Finite-$N$ Hilbert Space}
\author[a,b]{Robert de Mello Koch}
\author[a,b]{Anik Rudra}
\author[c]{Augustine Larweh Mahu}
\affiliation[a]{School of Science, Huzhou Normal University, Huzhou 313000, China}
\affiliation[b]{Mandelstam Institute for Theoretical Physics, School of Physics, University of the Witwatersrand, Private Bag 3, Wits 2050, South Africa}
\affiliation[c]{Department of Mathematics, University of Ghana, LG 62, Legon, Accra, Ghana}
\emailAdd{robert@zjhu.edu.cn}
\emailAdd{anikrudra23@gmail.com}
\emailAdd{almahu@ug.edu.gh}
\date{July 2026}

\abstract{Finite-$N$ trace relations reorganize the Hilbert space of gauge-invariant operators beyond the freely generated large-$N$ description. We study this structure using the Hironaka decomposition of the invariant ring of $d$ Hermitian $N\times N$ matrices. We first prove that the primary invariants may always be chosen to be homogeneous single-trace operators. We then show that, for any such choice, a nontrivial secondary invariant must appear by degree $L_{N,d}=2\log_d N+\log_d\log_d N+\mathcal{O}_d(1)$, which is parametrically below the first universal trace identity at degree $N+1$. This is a global overcrowding effect: exponentially many independent short single traces compete for only $1+(d-1)N^2$ algebraically independent coordinates. The overcrowding scale matches the fastest scrambling times expected for fast scramblers. We argue that this agreement of scales is not accidental: overcrowding provides a microscopic algebraic picture of scrambling in matrix models. Low-rank examples show that secondary invariants can distinguish configurations with identical primary data and, for suitable dynamics, label semiclassical sectors connected by instantons. These results identify the Hironaka decomposition as a natural framework for organizing perturbative and intrinsically finite-$N$ information in collective descriptions of gauge theories.}

\maketitle

\section{Introduction}\label{sec:introduction}

The AdS/CFT correspondence~\cite{Maldacena:1997re,Witten:1998qj,Gubser:1998bc} implies that quantum gravity on a negatively curved spacetime can be equivalently described by a conformal field theory in one lower dimension. This provides a powerful framework for studying non-perturbative aspects of quantum gravity using ordinary quantum field theory. A central element of the AdS/CFT dictionary is that the bulk gravitational coupling is controlled by $1/N^2$. Consequently, finite-$N$ effects in the CFT encode genuinely quantum-gravitational phenomena in the bulk. Understanding the exact finite-$N$ structure of the CFT is an essential step toward understanding quantum gravity beyond the perturbative large-$N$ limit.

A particularly attractive framework for studying finite-$N$ physics is provided by collective field theory~\cite{Jevicki:1979mb,Jevicki:1980zg}. Rather than quantizing the elementary matrix degrees of freedom, collective field theory reformulates the dynamics directly in terms of gauge-invariant variables. It provides one of the few constructive approaches to holography: the emergent bulk description is obtained by quantizing the gauge-invariant operators themselves~\cite{Das:1990kaa,Das:2003vw,deMelloKoch:2023ylr}. This perspective has proved remarkably successful for vector models~\cite{deMelloKoch:2010wdf,deMelloKoch:2018ivk}, where it leads to higher-spin gravity~\cite{Vasiliev:1990en,Vasiliev:1995dn}, and it provides a concrete framework for exploring finite-$N$ holography in matrix~\cite{deMelloKoch:2025ngs,deMelloKoch:2025rkw,deMelloKochKimVanZyl:2025,deMelloKochRodrigues:2026} and vector~\cite{deMelloKoch:2025cec,deMelloKoch:2026dfo,Ahn:2026ohj} models. Since the collective variables are precisely the gauge-invariant operators, finite-$N$ trace relations can be imposed directly within the collective description~\cite{Jevicki:1991yi}. The algebraic structure of the invariant ring is therefore not merely a mathematical curiosity; it determines the fundamental variables of the collective theory and hence the degrees of freedom from which the bulk spacetime is constructed.

At $N=\infty$, the space of gauge-invariant operators is freely generated by the single-trace operators, giving rise to the familiar perturbative Fock-space picture. At finite $N$, however, this description becomes redundant because the single traces satisfy polynomial identities~\cite{Pr,Razmyslov,DrenskyFormanek}. The exact finite-$N$ operator algebra -- and therefore the exact algebra underlying collective field theory -- is naturally described using the Hironaka decomposition~\cite{HochsterRoberts}. In this decomposition the generators are organized into two distinct classes: primary and secondary invariants. The primary invariants $P_a$ are algebraically independent and generate a polynomial subalgebra. For a matrix quantum mechanics with $d\geq2$ species of Hermitian matrices, the number of primary invariants is
\begin{equation}
 h=1+(d-1)N^2.\nonumber
\end{equation}
The secondary invariants $\eta_\alpha$, in contrast, form a finite module over this polynomial algebra. Every gauge-invariant operator admits a unique decomposition
\begin{equation}
 \mathcal O=\sum_{\alpha=1}^{N_s}f_\alpha(P_a)\eta_\alpha, \label{eq:intro-hironaka}
\end{equation}
where each $f_\alpha(P_a)$ is a polynomial in the primary invariants. For natural choices of primary invariants, the number of secondary invariants grows as $N_s\sim e^{cN^2}$, with $c=O(1)$.\footnote{The number of secondary invariants depends on the specific choice made for the primary invariants, so it is not an intrinsic invariant of the ring. For an example, see Section \ref{subsec:choice-dependence}. Reference~\cite{deMelloKochKimVanZyl:2025} argued for $e^{O(N^2)}$ growth for a natural choice of primary invariants. The value of $c$ in $N_s\sim e^{cN^2}$ is determined only after a particular set of primary invariants has been chosen.}

The relation between the invariant ring and the physical Hilbert space is especially direct in the gauged matrix harmonic oscillator. Gauge-invariant states are obtained by acting on the Fock vacuum with invariant polynomials in the adjoint creation operators. The map
\begin{equation}
 f\in{\cal R}_{N,d}\quad\longmapsto\quad f(A_1^\dagger,\ldots,A_d^\dagger)|0\rangle
 \label{eq:intro-ring-hilbert}
\end{equation}
is an isomorphism of graded vector spaces between the invariant ring and the singlet Fock space. Consequently, the Hironaka decomposition induces an exact decomposition of the singlet Hilbert space into a finite collection of primary towers. The primary invariants behave as perturbative degrees of freedom: arbitrary polynomial excitations along a primary direction are allowed. The secondary invariants do not generate new free oscillator directions, since products of secondaries reduce to the finite secondary basis with coefficients depending on the primaries -- see equation \eqref{eq:secondary-multiplication} below.. They are therefore more naturally interpreted as intrinsically finite-$N$ data that fiber over the perturbative primary-coordinate space. Previous work provides two suggestive indications that this finite data can have a non-perturbative interpretation: its size can be of order $e^{cN^2}$~\cite{deMelloKochKimVanZyl:2025} reproducing the characteristic entropy scale associated with black-hole degrees of freedom, and matrix integrals rewritten in invariant variables involve an integral over the primary data together with a sum over branches distinguished by the secondary data~\cite{deMelloKochRodrigues:2026}.

Within collective field theory the primary invariants play a distinguished role: they provide independent coordinates on the space of gauge-invariant configurations. If the interpretation of the primaries as perturbative degrees of freedom is correct, holography suggests that they should admit a basis consisting entirely of single-trace operators, since single traces are dual to perturbative single-particle excitations in the bulk. The first result of this paper establishes precisely this statement. We prove that a homogeneous system of parameters can always be chosen to consist entirely of single traces. This provides a natural bridge between the algebraic organization of the invariant ring and the standard holographic dictionary.

This result has a striking consequence. The first trace relations appear only at degree $N+1$, so below this degree the single-trace operators remain linearly independent in the stable range. Their number nevertheless grows exponentially with their length, whereas only $1+(d-1)N^2$ directions can serve as algebraically independent primary coordinates. We show that by the degree
\[
 L_{N,d}=2\log_dN+\log_d\log_dN+O_d(1)
\]
at least one single-trace direction must be omitted from any all-single-trace primary algebra and hence must appear in the secondary module. We refer to this global capacity constraint as \emph{overcrowding}. It occurs parametrically before the first trace identity becomes visible: the relevant traces remain linearly independent, but they cannot all be promoted simultaneously to freely generated coordinates of the complete finite-$N$ invariant algebra.

The logarithmic dependence of the overcrowding scale is the same as that expected for the scrambling time of a fast scrambler~\cite{Sekino:2008he}. The connection becomes concrete if one assumes that the word length of a typical operator grows linearly with time and that chaotic evolution explores the available single-trace directions sufficiently democratically. Under these assumptions, the counting argument that determines when the exponentially proliferating single-trace operators exhaust the $O(N^2)$ perturbative directions is the same as the argument determining when an initially localized perturbation has spread over an $O(1)$ fraction of those directions. We therefore propose that overcrowding supplies a microscopic algebraic explanation of scrambling in matrix models.

A deeper understanding of overcrowding also requires a clearer physical interpretation of the secondary sector. We therefore study simple low-rank examples in which secondary invariants distinguish different sheets of the invariant configuration space that share identical primary data. For suitable dynamics, transitions between these sheets are mediated by instantons and are exponentially suppressed, rendering the mixing between sectors invisible in perturbation theory about any one semiclassical vacuum. These examples demonstrate that the algebraic organization can acquire a non-perturbative dynamical meaning, although they do not imply that every secondary is a universal non-perturbative sector for every Hamiltonian.

The paper is organized as follows. Sections \ref{sec:background}--\ref{sec:first-secondary} develop the invariant-theory results and derive the overcrowding bound. Section~\ref{fastscrambling} discusses its possible connection with fast scrambling, while Sections \ref{sec:z2}--\ref{sec:four-matrix} use a toy model and low-rank matrix models to investigate the physical meaning of the secondary sector.

Given the overlap in subject matter with~\cite{deMelloKoch:2025ngs,deMelloKoch:2025rkw,deMelloKochKimVanZyl:2025,deMelloKochRodrigues:2026}, it is useful to summarize the new results established here. The principal contributions of this paper are the construction of an all-single-trace homogeneous system of parameters, the identification in the stable range of the first omitted single-trace direction with the first nontrivial secondary invariant, the derivation of an $O(\log N)$ overcrowding bound, and a proposed connection of this structure to fast scrambling and to instanton-mediated orientation and Gram-sheet sectors.

\section{Primary and Secondary Invariants}\label{sec:background}

Consider a matrix model of $d$ Hermitian $N\times N$ matrices,
\begin{equation}
        X^a=X^{a\,\dagger},  \qquad a=1,\ldots,d.
\end{equation}
The model has a $U(N)$ gauge symmetry and the $X^a$ transform in the adjoint representation
\begin{equation}
        X^a\longmapsto U X^a U^\dagger,   \qquad U\in U(N).   \label{eq:unitary-conjugation}
\end{equation}
The configuration space of the $X^a$ is a real vector space ${\rm Herm}_N^d$ of dimension $dN^2$. Define the ring $\mathcal R_{N,d}$ to be the algebra of complex-valued polynomial functions on this space, that are invariant under the $U(N)$ gauge symmetry,
\begin{equation}
        \mathcal R_{N,d} =\mathbb C[{\rm Herm}_N^d]^{U(N)}.\label{eq:hermitian-invariant-ring}
\end{equation}
The physical gauge-invariant configuration space is the orbit space ${\rm Herm}_N^d/U(N)$; $\mathcal R_{N,d}$ is its algebra of polynomial observables.

For our invariant-theory arguments it is useful to complexify this real representation. After complexifying we obtain the space of complex matrices $\mathrm{Mat}_N(\mathbb C)$. The complexified version of $U(N)$ conjugation symmetry is $GL_N(\mathbb C)$ conjugation symmetry. Since imposing Hermiticity does not change the algebraic problem of finding polynomial gauge-invariant operators\footnote{It only selects a real form of the complexified invariant theory.}, studying polynomial invariants our Hermitian problem is equivalent to studying the polynomial invariants of $d$ complex matrices acted on by simultaneous $GL_N(\mathbb C)$ conjugation
\begin{equation}
\mathcal R_{N,d}\simeq\mathbb C[\mathrm{Mat}_N(\mathbb C)^d]^{GL_N(\mathbb C)}.
\label{eq:invariant-ring}
\end{equation}
The matrices are Hermitian in the physical model, while the complexified description supplies the algebraic-geometric tools used below.

The invariant ring is generated by traces of words \cite{Pr,Razmyslov},
\begin{equation}
        t_w=\Tr\left(X^{i_1}X^{i_2}\cdots X^{i_m}\right), \qquad w=i_1i_2\cdots i_m. \label{eq:trace-word}
\end{equation}
The degree of $t_w$ is the word length $m$.  Cyclically related words give the same trace. At $N=\infty$, distinct cyclic words are independent generators and in any fixed total degree, the large-$N$ algebra is a polynomial algebra freely generated by the single traces.  At finite $N$ this description is modified in two distinct ways.  First, trace relations relate operators that are independent at infinite $N$.  The first trace relation has degree $N+1$ \cite{Pr,Razmyslov,DrenskyFormanek}.  Second, and more important for the present work, the finite-$N$ invariant ring has only a finite number of algebraically independent generators.  This second effect will have consequences even at degrees far below the first trace identity. For $d=1$ there are $N$ algebraically independent traces which can be taken to be $\Tr(X^n)$ for $n=1,2,\cdots,N$ and polynomials in these traces generate all gauge invariant operators. Equivalently for $d=1$ we could have taken the eigenvalues as the algebraically independent generators. The conclusion is that for $d=1$ there are $N$ algebraically independent generators. Now consider $d\geq2$. The real dimension of $\mathrm{Herm}_N^d$ is $dN^2$.  A generic tuple has only the central $U(1)$ as its stabilizer, so a generic $U(N)$ orbit has real dimension $N^2-1$.  The orbit space thus has real dimension
\begin{equation}
        h=dN^2-(N^2-1)=1+(d-1)N^2. \label{eq:number-of-primaries}
\end{equation}
Simply put, after a complete gauge fixing $h=1+(d-1)N^2$ degrees of freedom remain. The conclusion is that for $d\ge 2$ there are $1+(d-1)N^2$ algebraically independent generators. This explains the number of primaries quoted in the introduction. In the terminology of ring theory, $h$ is the Krull dimension of the complexified invariant ring in \eqref{eq:invariant-ring}. So, a generic gauge orbit is specified by $h$ independent continuous invariants, even though the number of distinct single traces grows exponentially with their length.

\subsection{The Hironaka decomposition}\label{subsec:hironaka}

At $N=\infty$, the invariant ring is freely generated by the single-trace invariants. At finite $N$, one can still choose a finite set of generators, but the structure of the ring is richer: the generators naturally split into primary and secondary invariants. There are $h$ primary invariants in total, they are algebraically independent and act freely. The complete set of primary invariants is referred to as a \emph{homogeneous system of parameters} (hsop). The invariant ring $\mathcal R_{N,d}$ is a finite module over the polynomial ring generated by the hsop
\begin{equation}
        \mathcal P=\mathbb C[p_1,\ldots,p_h].
\end{equation}
Algebraic independence says that the primary invariants are genuine coordinates on a dense open subset of the invariant configuration space.  Fixing the primary coordinates leaves only a finite set of algebraic information undetermined.

The gauge group $GL_N(\mathbb C)$ is reductive. The Hochster--Roberts theorem tells us that the invariant ring \eqref{eq:invariant-ring} is Cohen--Macaulay \cite{HochsterRoberts}. The important consequence of this for our discussion is that the finite module over $\mathcal P$ is then certain to be free. There are homogeneous invariants $\eta_\alpha$, with $\eta_0=1$, such that
\begin{equation}
        \mathcal R_{N,d}=\bigoplus_{\alpha=0}^{N_s-1}\eta_\alpha\,\mathcal P. \label{eq:hironaka-decomposition}
\end{equation}
The $\eta_\alpha$ are the \emph{secondary invariants}.  Every gauge-invariant operator has a unique expansion
\begin{equation}
        \mathcal O=\sum_{\alpha=0}^{N_s-1}\eta_\alpha F_\alpha(p_1,\ldots,p_h),\qquad F_\alpha\in\mathcal P.
        \label{eq:unique-hironaka-expansion}
\end{equation}
The distinction between primaries and secondaries is a distinction between the polynomial coordinate algebra and a finite basis for the full ring as a module over that algebra. For a homogeneous decomposition, the Hilbert series is
\begin{equation}
  H_{N,d}(q)=\frac{\sum_{\alpha=0}^{N_s-1} q^{\deg\eta_\alpha}}{\prod_{a=1}^{h}\left(1-q^{\deg p_a}\right)}.
        \label{eq:hironaka-hilbert-series}
\end{equation}
The denominator records the freely generated primary directions, while the numerator records the finite secondary module.  The secondary invariants obey the following multiplication rule
\begin{equation}
        \eta_\alpha\eta_\beta=\sum_{\gamma=0}^{N_s-1}C_{\alpha\beta}^{\ \ \gamma}(p)\,\eta_\gamma,
        \qquad C_{\alpha\beta}^{\ \ \gamma}(p)\in\mathcal P. \label{eq:secondary-multiplication}
\end{equation}
This finite multiplication table is precisely the information that is lost if the invariant algebra is approximated by a freely generated Fock space, i.e. if the trace relations are ignored.

In summary, the finite-N trace relations are encoded by the Hironaka decomposition \eqref{eq:hironaka-decomposition}: the primaries act freely, while the secondaries appear linearly and close under the product rule \eqref{eq:secondary-multiplication}.

\subsection{Physics interpretation}
\label{subsec:physical-interpretation}

To develop a physical interpretation of the Hironaka decomposition, we will first outline a precise correspondence between the invariant ring and the gauge-invariant Fock space. Towards this end, consider the gauged $d$-matrix harmonic oscillator. We use the matrices $X^a$ and their conjugate momenta to form the bosonic creation operators $A_a^\dagger$, $a=1,\ldots,d$, which transform in the adjoint representation of the gauge group,
\begin{equation}
A_a^\dagger\longmapsto U A_a^\dagger U^\dagger,\qquad U\in U(N).
\end{equation}
Let $\mathcal R_{N,d}$ denote the ring of polynomial invariants constructed from the $A_a^\dagger$. Every polynomial $f\in\mathcal R_{N,d}$ defines a singlet state
\begin{equation}
\Phi(f)=f(A_1^\dagger,\ldots,A_d^\dagger)|0\rangle .
\end{equation}
Conversely, every state in the singlet Fock space can be obtained in this way. Moreover, the map from invariants to Fock space states is injective: distinct normally ordered monomials in the bosonic creation operators produce linearly independent Fock states, so that
\begin{equation}
f(A_1^\dagger,\ldots,A_d^\dagger)|0\rangle=0\qquad\Rightarrow\qquad f=0 .
\end{equation}
It follows that $\mathcal H_{\rm singlet}\simeq\mathcal R_{N,d}$ as graded vector spaces, with polynomial degree mapped to oscillator excitation number. The Hironaka decomposition
\begin{equation}
\mathcal R_{N,d}=\bigoplus_{\alpha=0}^{N_s-1}\eta_\alpha\mathcal P\qquad \mathcal P=\mathbb C[p_1,\ldots,p_h],
\end{equation}
therefore induces the exact decomposition of Fock space
\begin{equation}
\mathcal H_{\rm singlet}=\bigoplus_{\alpha=0}^{N_s-1}\mathcal P(A^\dagger)\eta_\alpha(A^\dagger)|0\rangle .\label{FockDecomp}
\end{equation}
In this way the Hironaka decomposition of the invariant ring is also a decomposition of the singlet Fock space. There is a natural interpretation of the decomposition \eqref{FockDecomp}.  The primary invariants act freely and consequently mimic Fock oscillators: each primary invariant can be raised to any power, i.e. it can be excited an arbitrary number of times. Each secondary instead supplies a new seed for a complete primary tower,
\begin{equation}
        \eta_\alpha,  \quad \eta_\alpha p_a, \quad \eta_\alpha p_ap_b, \quad\ldots. \label{eq:secondary-tower}
\end{equation}
This motivates the interpretation of the secondary basis as a finite set of distinguished sectors or non-perturbative states, with ordinary perturbative excitations built on top of each sector~\cite{deMelloKoch:2025ngs,deMelloKochRodrigues:2026}.  This interpretation \emph{is not proved}, but it is also not merely terminological as we now explain. In simple matrix integrals, we can express the integral as an integral over the invariants~\cite{deMelloKochRodrigues:2026}. The result is suggestive: primary invariants appear as integration variables. In the path integral we integrate over perturbative degrees of freedom. The complete integral is a sum over branches over the primary space with the number of branches equal to the number of secondary invariants. Different secondary values distinguish branches of invariant configuration space. In the path integral we sum over non-perturbative saddles.

There is also a suggestive entropy count.  Previous work~\cite{deMelloKochKimVanZyl:2025} comparing invariant counting with the restricted-Schur polynomial description of the invariants\footnote{The restricted Schur polynomials provide a vector space description of the invariants~\cite{Bhattacharyya:2008rb,Bhattacharyya:2008xy,deMelloKoch:2012sie}. By summing linear combinations of restricted Schur polynomials we can reconstruct \emph{any} gauge invariant operator. The invariant theory description employed here is algebraic; by multiplying invariants and summing as in \eqref{eq:unique-hironaka-expansion} we can generate \emph{any} gauge invariant operator. For other bases of gauge invariant operators, equivalent to the restricted Schur polynomals see~\cite{Kimura:2007wy,Brown:2007xh,Brown:2008ij}.} argued that, at fixed $d$ and large $N$, the number of secondary invariants behaves as
\begin{equation}
        N_s\sim \exp\left(c N^2\right),   \qquad c=O(1).  \label{eq:secondary-growth}
\end{equation}
Thus the finite module carries an entropy of $\log N_s=O(N^2)$~\cite{deMelloKochKimVanZyl:2025} which matches a black-hole entropy.  The agreement is structural: the Cohen--Macaulay theorem does not, by itself, identify an individual secondary with a black-hole microstate.  It does show that the exact finite-$N$ operator algebra contains an exponentially large, intrinsically non-free sector with precisely the parametric size expected of non-perturbative gravitational states.

The above decomposition of the graded vector space of singlet states is not, in general, an orthogonal decomposition with respect to the Fock-space inner product. Further, the individual towers need not be preserved by an interacting Hamiltonian. These are additional dynamical questions. Nevertheless, the decomposition is exact at the level of the finite-$N$ state space and it supplies the algebraic structure whose physical consequences we study below. More specifically, the central question of this paper is when the additional ``secondary'' structure is forced to appear.  Trace identities are unavoidable at degree $N+1$. We argue that the first secondary invariant is forced much earlier, at a degree of order $\log N$.  The mechanism is an ``overcrowding'' phenomenon: exponentially many short single traces must fit into only $1+(d-1)N^2$ independent primary-coordinate slots.

\section{Single-Trace Primary Invariants}\label{sec:single-trace-primaries}

The Hironaka decomposition is not unique.  The goal of this section is to argue that it is always possible to choose the independent primary coordinates to be single traces. This result is physically appealing: single traces are holographically dual to perturbative single particle excitations in the bulk, which supports the interpretation of primary invariants as perturbative degrees of freedom.

For each degree $m$, define
\begin{equation}
 \cT_m=\operatorname{span}_{\mathbb C} \left\{\Tr(X^{i_1}\cdots X^{i_m})\right\} \subset (\cR_{N,d})_m.
 \label{eq:single-trace-space}
\end{equation}
An element of $\cT_m$ is a complex linear combination of single trace operators, with each single trace operator of degree $m$.

\begin{theorem}[Single-trace primary coordinates]\label{thm:single-trace-hsop}
For every $N\geq1$ and $d\geq2$, the invariant ring $\cR_{N,d}$ admits a homogeneous system of parameters
\begin{equation}
 p_1,\ldots,p_h, \qquad h=1+(d-1)N^2,
 \end{equation}
with
\begin{equation}
 p_a\in\cT_{m_a},  \qquad p_a=\Tr P_a(X^1,\ldots,X^d).  \label{eq:all-single-trace-hsop}
\end{equation}
Thus no multi-trace operator is forced into the independent primary coordinate set.
\end{theorem}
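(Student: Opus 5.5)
We prove the statement using the standard characterization of a homogeneous system of parameters in a graded ring $\cR=\cR_{N,d}$ with $\cR_0=\mathbb C$. Homogeneous elements $p_1,\ldots,p_h$ of positive degree, with $h=\dim\cR$, form an hsop if and only if their common zero locus in $\operatorname{Spec}\cR$ is the irrelevant point alone. Geometrically this means the following. Let $\pi:\mathrm{Mat}_N(\mathbb C)^d\to \mathrm{Mat}_N(\mathbb C)^d/\!\!/GL_N$ be the quotient map. We need $V(p_1,\ldots,p_h)=\pi^{-1}(0)$, which is the nullcone $\mathcal N$. By the Hilbert--Mumford criterion, and also by the classical description via Procesi--Razmyslov that the ring is generated by traces, $\mathcal N$ consists of exactly those tuples for which every trace word $t_w$ vanishes. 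Equivalently, these are the tuples that are simultaneously strictly upper triangularizable.

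The plan has three steps.

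\emph{Step 1.} Fix a finite generating set of $\cR$ consisting of single traces $t_w$ with $|w|\le D$. The Procesi--Razmyslov bound gives $D\le N^2$. The common zero set of these generators in $\mathrm{Spec}\,\cR$ is just the irrelevant ideal's point.

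\emph{Step 2.} Replace the generators by single traces of a common degree. Let $M$ be a common multiple of the degrees $|w|$. The power sum $t_w^{M/|w|}$ vanishes exactly where $t_w$ does, but it is a multi-trace. Instead, use the single trace $\Tr\bigl((X^{w})^{M/|w|}\bigr)=t_{w^{M/|w|}}$. Its vanishing does not follow from $t_w=0$ alone. However, the vanishing of all traces of powers $\Tr((X^w)^k)$ for $k=1,\ldots,N$ is equivalent to $X^w$ being nilpotent, which is what the nullcone needs. So we take the finite family
\[
\mathcal F=\bigl\{\Tr((X^w)^k)\;:\; |w|\le D,\ 1\le k\le N\bigr\}\subset\bigcup_m\cT_m .
\]
Its common zero set is $\mathcal N$. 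The reason is that all of these vanish iff every word $X^w$ with $|w|\le D$ is nilpotent. Since $\Tr(X^{w})$ for longer words is a polynomial in the generators, it then vanishes too, so the tuple lies in $\pi^{-1}(0)$. To homogenize, raise each element to a power. This is where we must stay single trace: we use instead $\Tr((X^w)^{k r})$ with $kr|w|=M$ for a common $M$. Nilpotency of $X^w$ is still detected by the set $\{\Tr((X^w)^{j})\}$ over a suitable range of $j$. It suffices to include, for each $w$, the exponents $j$ that are multiples of a fixed $r_w$ up to $r_wN$, since $(X^w)^{r_w}$ nilpotent iff $X^w$ nilpotent. The cleanest route is to work degree by degree: let $\mathcal F_M\subset\cT_M$ collect these, for finitely many degrees $M$.

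\emph{Step 3.} Reduce from a finite family of single traces with zero set $\mathcal N$ to exactly $h$ elements by a graded Noether normalization / prime-avoidance argument. The standard lemma we need says the following: if homogeneous $f_1,\ldots,f_r$ of a common degree $M$ have radical of $(f_1,\ldots,f_r)$ equal to $\cR_+$, then $h$ generic linear combinations of them form an hsop. To get a common degree, raise the family to a common single-trace degree as in Step 2; taking $M$ divisible enough makes $\mathcal F$ sit inside $\cT_M$. The key point is that generic linear combinations of elements of $\cT_M$ are still in $\cT_M$, hence single traces in the sense of the theorem ($p_a=\Tr P_a$). The lemma itself is proved inductively. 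We choose $p_1\in\operatorname{span}\mathcal F$ avoiding all minimal primes of $\cR$; $\cR$ is a domain, so any nonzero $p_1$ works. Then $p_{i+1}$ is chosen outside all minimal primes over $(p_1,\ldots,p_i)$ that do not contain $\cR_+$. Each such prime fails to contain the whole span, because the span's radical is $\cR_+$. Since $\mathbb C$ is infinite, a vector space is never a finite union of proper subspaces, so a generic choice works. After $h$ steps the dimension is $0$.

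We expect the main obstacle to be Step 2, namely ensuring that the all-single-trace family cuts out exactly the nullcone while remaining homogeneous of one degree. This forces us to use nilpotency of words (traces of powers of words) in place of powers of traces. Everything else is standard commutative algebra.
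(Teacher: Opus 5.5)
Your skeleton matches the paper's: single traces generate the ring, and you cut down the zero locus one generic element at a time, using that a vector space is not a finite union of proper subspaces. However, the step you flag as the main obstacle is not resolved, and your proposed fix cannot work. Step 3 needs a family inside a single $\cT_M$ whose common zero set is the nullcone, and you assert that ``taking $M$ divisible enough makes $\mathcal F$ sit inside $\cT_M$''. But $\mathcal F$ lives in several degrees. Replacing $\Tr(X^w)$ by $\Tr((X^w)^{M/|w|})$ changes the zero locus, as you note yourself. The multi-exponent families you use to restore nilpotency again occupy several degrees. Worse, no single degree can ever suffice when $N\geq 2$. For any $M$, take $X^1=\mathrm{diag}(1,\zeta,0,\ldots,0)$ with $\zeta^M=-1$ and $X^a=0$ for $a\geq 2$. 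Every word of length $M$ other than $(X^1)^M$ contains a zero factor, and $\Tr (X^1)^M=1+\zeta^M=0$, so all of $\cT_M$ vanishes at this point. Yet $\Tr (X^1)^{2M}=2\neq 0$, so the point is not in the nullcone. Hence $\sqrt{(\cT_M)}\neq\cR_+$ for every $M$, and the hypothesis of your common-degree lemma is never met by single traces. In fact, for $N\geq 2$ no hsop has all $p_a$ in a single $\cT_M$.

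What is missing is that the degree must be chosen afresh at each step, depending on the finitely many components present at that step. You also need a lemma guaranteeing that such a degree exists. Given finitely many non-origin points $q_1,\ldots,q_s$, one on each current positive-dimensional component, there must be one $M$ such that for every $j$ some element of $\cT_M$ is nonzero at $q_j$. Each $q_j$ is detected in some degree, but different components may need different degrees. The usual graded prime-avoidance trick equalizes degrees using products and powers of the detecting elements, and these are multi-traces. The paper proves this lemma as follows:
\begin{enumerate}
\item Take words $w_j$ with $\Tr w_j(q_j)\neq 0$.
\item Pass to the non-nilpotent matrices $B_j=w_j(q_j)^{L/\ell_j}$, where $L=\operatorname{lcm}(\ell_1,\ldots,\ell_s)$.
\item Choose a common exponent $q$ at which the phases of the maximal-modulus eigenvalues of all the $B_j$ are simultaneously close to $1$, by recurrence on a torus. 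Then $\Tr B_j^q\neq 0$ for every $j$.
\end{enumerate}
One then takes $p_{r+1}$ to be a generic element of $\cT_{qL}$, so the degrees $m_a$ vary with $a$. With this lemma replacing your Step 2, applied anew at each stage of your Step 3 induction, the argument goes through.
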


The proof has a simple physical core. A continuous direction in the space of gauge-invariant data is a continuous family of matrix configurations, modulo gauge transformations, along which the gauge-invariant data genuinely changes. The primary invariants are the coordinates that detect these continuous directions: once the primaries are fixed, no continuous gauge-invariant freedom remains.

To argue that the primary coordinates can be chosen to be single-trace operators, note that all polynomial gauge-invariant observables are generated by traces. Consequently, if every single trace is constant along some motion, then every gauge-invariant polynomial is constant along that motion. Such a motion is invisible to all gauge-invariant measurements, and hence is not a genuine direction in the invariant space. \emph{It follows that the single traces already detect all genuine continuous directions.} We may therefore choose an algebraically independent set of single traces as the primary invariants, while the secondary invariants describe the remaining finite algebraic data over these coordinates.

\subsection{An operational criterion}\label{subsec:operational-hsop-criterion}

Let $\mathcal M_{N,d}$ denote the space of consistent gauge-invariant data. A point of $\mathcal M_{N,d}$ is specified by assigning values to all polynomial gauge-invariant operators, subject to the finite-$N$ trace relations. $\mathcal M_{N,d}$ is the space obtained after replacing matrix configurations by the values of their gauge-invariant observables\footnote{Equivalently, in algebraic-geometric language, $\mathcal M_{N,d}$ is the affine variety $\mathcal M_{N,d}={\rm Spec}\,\mathcal R_{N,d}$. We will not need this terminology in what follows; the important point is simply that $\mathcal M_{N,d}$ parametrizes the possible values of the gauge-invariant polynomial observables.}.

There is a distinguished point in this space, which we call the \emph{invariant origin} and denote by $o$. It is the point at which every positive-degree invariant vanishes. On the physical Hermitian slice, this point is represented only by the zero configuration. Indeed, if $X^a=X^{a\dagger}$, then $\Tr(X^a X^a)=0$ implies $X^a=0$. In the complexified configuration space, however, the same invariant data can arise from nonzero nilpotent configurations: these lie in the null cone and have all positive-degree invariants equal to zero. The proof below is naturally formulated in this complexified setting.

We will use the following criterion.  Let $p_1,\ldots,p_h$ be homogeneous invariants.  If their common zero set in $\mathcal M_{N,d}$ consists only of $o$, then
\begin{equation}
        \mathcal R_{N,d}/(p_1,\ldots,p_h) \label{eq:finite-quotient}
\end{equation}
is finite-dimensional, and the $p_a$ form a homogeneous system of parameters. The algebraic proof is given in
Appendix~\ref{app:hsop-criterion}.  Intuitively, the equations $p_a=0$ have removed every continuous invariant direction.  Since there are exactly $h=\dim\mathcal M_{N,d}$ of them, they provide a complete set of independent
coordinates.

The theorem we are trying to establish is therefore reduced to the following constructive question: Can one choose $h$ homogeneous single traces whose simultaneous vanishing leaves only the invariant origin?  The answer is yes.

\subsection{Cutting down all invariant branches} \label{subsec:cutting-branches}

Assume that homogeneous single traces $p_1,\ldots,p_r$ have already been chosen, and consider their common zero set
\begin{equation}
        Z_r=\{p_1=\cdots=p_r=0\}        \subset\mathcal M_{N,d}.        \label{eq:remaining-zero-set}
\end{equation}
Write the positive-dimensional irreducible components of $Z_r$ as $C_1,\ldots,C_s$.  These components are the continuous branches of invariant data remaining after the first $r$ conditions are imposed. Choose a non-origin point $q_j\in C_j$ on each branch.  Because traces of words generate the invariant ring, some single trace is nonzero at $q_j$.  The technical issue is to choose the detecting traces homogeneously and to make one choice work on all surviving branches at once. The detecting traces for different branches need not initially have the same degree. The following lemma removes this mismatch.
\begin{lemma}[Common degree]\label{lem:common-degree}
Given finitely many non-origin points $q_1,\ldots,q_s$ of $\mathcal M_{N,d}$, there is an integer $M\geq1$ such that, for every $j$, some element $t_j\in\mathcal T_M$ obeys
\begin{equation}
        t_j(q_j)\neq0.
\end{equation}
\end{lemma}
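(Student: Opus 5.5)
The plan is to reduce the statement to power sums of eigenvalues of a single matrix word at each point, and then to choose the common degree $M$ by a simultaneous Diophantine approximation argument. The flexibility of allowing arbitrary elements $t_j\in\mathcal T_M$ will not even be needed: each $t_j$ will be a single trace of a power of a word.

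\textbf{Step 1: choose a nilpotency-free word at each point.} Each point $q_j$ of $\mathcal M_{N,d}$ is the image of some tuple $X_{(j)}=(X^1_{(j)},\ldots,X^d_{(j)})\in\mathrm{Mat}_N(\mathbb C)^d$, since the quotient map for a reductive group is surjective. Traces of words generate $\mathcal R_{N,d}$, and $q_j$ is not the invariant origin. Hence some single trace is nonzero at $q_j$: there is a word $w_j$ of length $m_j\ge1$ with $\Tr W_j\neq0$, where $W_j=W_{w_j}(X_{(j)})$. In particular $W_j$ is not nilpotent.

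\textbf{Step 2: express traces of powers as power sums.} Let $\nu_{j,1},\ldots,\nu_{j,r_j}$ be the distinct \emph{nonzero} eigenvalues of $W_j$, with multiplicities $c_{j,i}\in\mathbb Z_{>0}$. Since $W_j$ is not nilpotent, $r_j\ge1$. For every $k\ge1$,
\begin{equation}
 \Tr\bigl(W_j^{k}\bigr)=\sum_{i=1}^{r_j}c_{j,i}\,\nu_{j,i}^{\,k}.
\end{equation}
The element $t=\Tr(W_j^{k})$ is a single trace of degree $k m_j$, evaluated at $q_j$. Set $L=\mathrm{lcm}(m_1,\ldots,m_s)$. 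For $M=nL$ we take $t_j=\Tr\bigl(W_j^{\,nL/m_j}\bigr)\in\mathcal T_M$.

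\textbf{Step 3 (main obstacle): find one $n$ that works for all $j$ simultaneously.} Individually, Newton's identities give a nonzero power sum among the first $N$ exponents. The exponents producing nonzero values may, however, differ between points, and cancellations among terms with different phases could kill a particular exponent. I would handle this with positivity of the multiplicities. Write $\nu_{j,i}^{L/m_j}=\rho_{j,i}e^{2\pi i\theta_{j,i}}$ with $\rho_{j,i}>0$. By Dirichlet's simultaneous approximation theorem, applied to the finite family of angles $\{\theta_{j,i}\}$, there are infinitely many $n\ge1$ with $\|n\theta_{j,i}\|<1/6$ for all $j,i$, where $\|\cdot\|$ denotes distance to the nearest integer. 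For any such $n$, every term $c_{j,i}\rho_{j,i}^{\,n}e^{2\pi i n\theta_{j,i}}$ has argument within $\pi/3$ of $0$. Therefore
\begin{equation}
 \mathrm{Re}\,t_j(q_j)\ \ge\ \tfrac12\sum_i c_{j,i}\rho_{j,i}^{\,n}\ >\ 0 \quad\text{for every } j.
\end{equation}
This shows $t_j(q_j)\neq0$ for all $j$ with the common degree $M=nL$.

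The only delicate point is Step 3, namely ruling out phase cancellations uniformly over all points. The positive integer multiplicities together with simultaneous phase alignment resolve it. As a byproduct, the argument gives infinitely many admissible $M$, all multiples of $L$.
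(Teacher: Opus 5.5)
Your proof is correct and takes essentially the same route as the paper: detect each point with a single word trace, equalize lengths via $L=\mathrm{lcm}$, then choose one further common power by simultaneous phase alignment of eigenvalues. The only difference is a minor one. You align the phases of \emph{all} nonzero eigenvalues to within $\pi/3$ using Dirichlet, so that positivity of the real part follows directly. The paper instead aligns only the maximal-modulus eigenvalues by recurrence on a torus and relies on their dominance at large powers.
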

A proof is given in appendix~\ref{app:common-degree}.  The idea is to start with one nonzero word trace at each point, repeat the words until their lengths agree, and then take a further common power.  The latter power is chosen so that the phases of all eigenvalues of maximal modulus align; the dominant contributions to the traces then add rather than cancel.

Fix the degree $M$ supplied by the lemma.  For each branch $C_j$, define
\begin{equation}
        K_j=\{t\in\mathcal T_M:\ t|_{C_j}=0\}.\label{eq:bad-linear-subspace}
\end{equation}
This is a proper linear subspace of $\mathcal T_M$, because the element $t_j$ is nonzero at the chosen point $q_j\in C_j$. A finite union of proper linear subspaces cannot fill a complex vector space. We may therefore choose
\begin{equation}
        p_{r+1}\in \mathcal T_M\setminus\bigcup_{j=1}^{s}K_j. \label{eq:generic-single-trace-cut}
\end{equation}
The new single trace is not identically zero on any surviving positive-dimensional branch.  Its vanishing consequently imposes one genuine polynomial condition on every such branch and lowers its dimension by one.

Starting from a space of dimension $h$, we repeat this construction $h$ times.  After the $h$th step no positive-dimensional component remains. The common zero set $Z_h$ is homogeneous because all the $p_a$ are homogeneous. If $Z_h$ contained a nonzero point, rescaling all matrices would generate a one-parameter family of nonzero solutions, contradicting its being zero-dimensional.  The only remaining invariant data are therefore those of the origin and so $Z_h=\{o\}$.  The criterion of subsection~\ref{subsec:operational-hsop-criterion} now proves Theorem~\ref{thm:single-trace-hsop}.

We end this section with two comments.  First, the theorem is an existence statement; it does not select a canonical set of primaries.  Second, a ``single trace'' in the theorem means any complex linear combination of single trace operators, not the trace of a single word.  A generic linear combination is needed in general, to produce a single observable capable of detecting several branches simultaneously.  Is it possible that every primary can be chosen as the trace of a single monomial word? Explicit computations seem to suggest this is not the case. For example, the possibility of such a stronger choice can be considered in the example for which $N=2$ and $d=4$. In this case there are 13 primary invariants. We can search for a system of primary invariants such that there are 4 degree one invariants and 9 degree two invariants. An explicit analysis~\cite{deMelloKoch:2025ngs} shows that, with this degree assignment, no homogeneous system of parameters can be chosen in which every primary invariant is the trace of a single monomial word. The point is that generic linear combinations such as $\Tr(X^1X^2)-\Tr(X^3X^4)$ are needed for the natural low-degree hsop.

\section{The First Secondary}\label{sec:first-secondary}

This section derives the key algebraic result of this paper. Using the result derived in the last section, together with well known facts about the growth of the number of single trace operators, we can now ask at what degree, denoted $L_{N,d}$, the first secondary invariant must appear in the invariant algebra of $d$ $N\times N$ matrices. We find that $L_{N,d}=2\log_d N+\log_d\log_d N+O_d(1)$. This result is an upper bound: if we are systematically listing the single trace primary invariants, ordered by increasing degree, by the time we reach $L_{N,d}$ at least one of the invariants listed must be a secondary invariant. 

Fix an all-single-trace homogeneous system of parameters and write
\begin{equation}
        \mathcal P=\mathbb C[p_1,\ldots,p_h].
\end{equation}
The secondary basis is isolated by setting all positive-degree primaries to zero. Setting all positive degree primaries to zero produces the graded quotient
\begin{equation}
        \mathcal Q_{\mathcal P}=\frac{\mathcal R_{N,d}}{(p_1,\ldots,p_h)\mathcal R_{N,d}}\cong\bigoplus_{\alpha=0}^{s-1}\mathbb C\,\overline{\eta}_\alpha.  \label{eq:secondary-quotient-basis},
\end{equation}
where we have used the free decomposition \eqref{eq:hironaka-decomposition}. In terms of the quotient, the degree of the first nontrivial secondary is given by
\begin{equation}
  \delta(\mathcal P)=\min\left\{m>0:(\mathcal Q_{\mathcal P})_m\neq0\right\}. \label{eq:first-secondary-degree}
\end{equation}
This degree depends, in general, on the chosen primary coordinates. Let
\begin{equation}
        \mathcal U_m={\rm span}\{p_a:\deg p_a=m\} \subseteq\mathcal T_m  \label{eq:used-primary-subspace}
\end{equation}
be the subspace of degree-$m$ single traces used as primary coordinates, and define
\begin{equation}
        m_*(\mathcal P)=\min\{m\geq1:\mathcal U_m\neq\mathcal T_m\}.\label{eq:first-missing-degree}
\end{equation}
From the above definition, $m_*$ is the smallest degree at which at least one single-trace invariant is not used as a primary invariant. We will now prove that this trace which is not used as a primary invariant is actually a secondary invariant.

\subsection{The first missing trace is the first secondary}\label{subsec:first-missing-trace}

The key input is the stable range of the trace algebra.  The first trace relation appears at degree $N+1$~\cite{Pr,Razmyslov,DrenskyFormanek}.  Up to this degree, distinct monomials in the single trace invariants are linearly independent.  In particular, a single trace cannot be cancelled by a linear combination of multi-trace operators of the same total degree.

\begin{theorem}[First missing single-trace direction] \label{thm:first-secondary}
Let $\mathcal P$ be an all-single-trace homogeneous system of parameters. If $m_*(\mathcal P)\leq N$, then $\delta(\mathcal P)=m_*(\mathcal P)$, and the first secondary space is canonically the unused part of the single-trace space,
\begin{equation}
(\mathcal Q_{\mathcal P})_{m_*} \cong \frac{\mathcal T_{m_*}}{\mathcal U_{m_*}}.\label{eq:first-secondary-quotient}
\end{equation}
Consequently,
\begin{equation}
\dim(\mathcal Q_{\mathcal P})_{m_*}=\dim\mathcal T_{m_*}-\dim\mathcal U_{m_*}.\label{eq:first-secondary-multiplicity}
\end{equation}
\end{theorem}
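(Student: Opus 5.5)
The plan is to compute the low-degree pieces of the quotient $\mathcal Q_{\mathcal P}=\mathcal R_{N,d}/(p_1,\ldots,p_h)\mathcal R_{N,d}$ directly, degree by degree, using the stable-range structure of the trace algebra. For $m\leq N$, the degree-$m$ part of $\mathcal R_{N,d}$ is the degree-$m$ part of the free polynomial algebra on single traces, since no trace relation occurs below degree $N+1$. We can therefore write
\begin{equation}
(\mathcal R_{N,d})_m=\mathcal T_m\oplus \mathcal D_m, \nonumber
\end{equation}
where $\mathcal D_m$ is spanned by products of at least two positive-degree single traces. This decomposition is canonical in this range because monomials in single traces are linearly independent there.

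Next we identify the degree-$m$ part of the ideal $I=(p_1,\ldots,p_h)\mathcal R_{N,d}$ for $m\leq m_*$. An element of $I_m$ is $\sum_a p_a r_a$ with $r_a$ homogeneous of degree $m-\deg p_a$. Terms with $\deg r_a=0$ contribute exactly $\mathcal U_m$. Terms with $\deg r_a>0$ land in $\mathcal D_m$, because $r_a$ is a polynomial in single traces of positive degree.

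Conversely, we claim $\mathcal D_m\subseteq I_m$ whenever $m\leq m_*$. Each factor $t$ of a multi-trace monomial has degree $k<m\leq m_*$, so $t\in\mathcal T_k=\mathcal U_k$. Hence $t$ is a linear combination of primaries, and the whole product lies in $I$. This yields
\begin{equation}
I_m=\mathcal U_m\oplus\mathcal D_m\qquad (m\leq m_*). \nonumber
\end{equation}
The directness of the sum again uses the stable-range independence, since $\mathcal U_m\subseteq\mathcal T_m$.

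It follows that $(\mathcal Q_{\mathcal P})_m\cong\mathcal T_m/\mathcal U_m$ for all $1\leq m\leq m_*$. For $m<m_*$ this quotient vanishes by definition of $m_*$. At $m=m_*$ it is nonzero, which gives $\delta(\mathcal P)=m_*(\mathcal P)$ together with the dimension formula. The isomorphism is canonical: it is induced by the inclusion $\mathcal T_{m_*}\hookrightarrow(\mathcal R_{N,d})_{m_*}$ followed by the projection to $\mathcal Q_{\mathcal P}$.

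The main obstacle is the step where we split $I_m$ into a single-trace part and a multi-trace part. We must make sure that a combination $\sum_a p_a r_a$ with positive-degree $r_a$ cannot produce a pure single trace outside $\mathcal U_m$. This requires that $\mathcal T_m\cap\mathcal D_m=0$ in $(\mathcal R_{N,d})_m$ for $m\leq N$, which is exactly the absence of trace relations below degree $N+1$ quoted from \cite{Pr,Razmyslov,DrenskyFormanek}. I would state this explicitly as the input: the natural map from the free symmetric algebra on cyclic words to $\mathcal R_{N,d}$ is injective in degrees $\leq N$. With that in hand, the rest of the argument is linear bookkeeping.
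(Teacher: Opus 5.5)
Your proposal is correct and follows essentially the same route as the paper. Multi-trace monomials in degree $\le m_*$ die in $\mathcal Q_{\mathcal P}$ because their factors lie in $\mathcal T_k=\mathcal U_k$ for $k<m_*$, and stable-range independence ($\mathcal T_m\cap\mathcal D_m=0$ for $m\le N$) ensures that no single-trace direction outside $\mathcal U_{m_*}$ is killed. Your explicit identification $I_m=\mathcal U_m\oplus\mathcal D_m$, obtained by noting that $p_a r_a$ with $\deg r_a>0$ lies in $\mathcal D_m$, is a cleaner rendering of the paper's somewhat informal last step rather than a different argument.
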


\begin{proof}
For every $m<m_*$ one has $\mathcal U_m=\mathcal T_m$.  Every invariant of total degree $m$ is a linear combination of products of single traces whose individual degrees are at most $m$.  Each positive-degree trace in such a product is thus a linear combination of primaries.  The entire positive-degree part of $\mathcal R_{N,d}$ below $m_*$ vanishes in the quotient \eqref{eq:secondary-quotient-basis}, and hence
\begin{equation}
   (\mathcal Q_{\mathcal P})_m=0,  \qquad 0<m<m_*.    \label{eq:no-secondary-below-mstar}
\end{equation}
At degree $m_*$, separate the invariant space into its single-trace and multi-trace parts.  Every genuine multi-trace monomial contains at least two positive-degree factors, so each factor has degree strictly less than $m_*$.  These factors lie in the span of lower-degree primaries, and the multi-trace monomial therefore vanishes in $\mathcal Q_{\mathcal P}$. Within the single-trace sector, the combinations that vanish directly are precisely the degree-$m_*$ primary combinations $\mathcal U_{m_*}$.  Since $m_*\leq N$, stable-range independence forbids an additional relation that would identify a remaining single trace with multi-trace terms.  The surviving space is therefore exactly
$\mathcal T_{m_*}/\mathcal U_{m_*}$, proving the theorem.
\end{proof}
The theorem gives a precise version of the statement that the first unused single trace becomes a secondary.  Because primaries can be linear combinations of words, the invariant statement concerns a missing direction in the vector space $\mathcal T_{m_*}$, not necessarily one distinguished word.

\subsection{Necklace counting and the logarithmic bound}\label{subsec:necklace-bound}

In the stable range, $\dim\mathcal T_m$ is the number of cyclic words of length $m$ in an alphabet of $d$ matrices\footnote{Throughout we assume that $d\ge 2$ so that $h=1+(d-1)N^2$. For $d=1$ we have $h=N$. For $d=1$ there are no secondary invariants and no overcrowding.}.  The counting of large-$N$ single-trace operators is equivalent to counting cyclic words, or necklaces, and is naturally performed using Burnside's lemma or Polya enumeration; see, for example~\cite{Sundborg:1999ue,Spradlin:2004pp,Bianchi:2003wx}. Burnside's lemma gives the necklace formula
\begin{equation}
a_d(m)\equiv\dim\mathcal T_m =\frac{1}{m}\sum_{r\mid m}\varphi(r)d^{m/r},  \qquad m\leq N,
\label{eq:necklace-formula}
\end{equation}
where $\varphi$ is Euler's totient function.  Let
\begin{equation}
        A_d(L)=\sum_{m=1}^{L}a_d(m) \label{eq:cumulative-single-traces}
\end{equation}
be the number of independent single-trace directions through length $L$, and define the \emph{overcrowding scale}
\begin{equation}
        L_{N,d}=\min\left\{L:A_d(L)>1+(d-1)N^2\right\}. \label{eq:overcrowding-scale}
\end{equation}
There are only $h=1+(d-1)N^2$ primary slots in total.  If every single-trace direction through degree $L$ is used as a primary, we would need at least $A_d(L)$ primaries.  The definition \eqref{eq:overcrowding-scale} therefore implies
\begin{equation}
        m_*(\mathcal P)\leq L_{N,d}  \label{eq:mstar-bound}
\end{equation}
for every all-single-trace homogeneous system of parameters. Combining this observation with Theorem~\ref{thm:first-secondary} gives the following universal bound.

\begin{corollary}[Logarithmic onset]\label{cor:logarithmic-onset}
If $L_{N,d}\leq N$, then every all-single-trace Hironaka decomposition obeys
\begin{equation}
        \delta(\mathcal P)\leq L_{N,d}.   \label{eq:delta-upper-bound}
\end{equation}
At fixed $d\geq2$ and large $N$,
\begin{equation}
        L_{N,d}=2\log_d N+\log_d\!\log_d N+O_d(1).  \label{eq:logarithmic-asymptotics}
\end{equation}
\end{corollary}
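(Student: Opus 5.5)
The corollary has two parts, and I would treat them separately: an inequality that is pure bookkeeping, and an asymptotic estimate that needs a short analytic argument.

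\textbf{The inequality.} By \eqref{eq:mstar-bound}, every all-single-trace hsop has $m_*(\mathcal P)\leq L_{N,d}$. The reason is that if $\mathcal U_m=\mathcal T_m$ for all $m\leq L_{N,d}$, then the primaries would contain $\sum_{m\leq L_{N,d}}\dim\mathcal T_m=A_d(L_{N,d})>h$ linearly independent elements. This counting is valid because the necklace formula \eqref{eq:necklace-formula} holds for $m\leq L_{N,d}\leq N$. Under the hypothesis $L_{N,d}\leq N$ we therefore also have $m_*\leq N$. Theorem~\ref{thm:first-secondary} then gives $\delta(\mathcal P)=m_*(\mathcal P)\leq L_{N,d}$.

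\textbf{The asymptotics.} I first estimate $A_d(L)$. In \eqref{eq:necklace-formula}, the $r=1$ term is $d^m/m$. The remaining terms are bounded by $\frac{1}{m}\sum_{r\mid m,\,r\geq2}\varphi(r)d^{m/r}\leq d^{m/2}\cdot(\text{number of divisors})\leq m\,d^{m/2}$. This gives
\[
a_d(m)=\frac{d^m}{m}\left(1+O(m^2d^{-m/2})\right).
\]
Summing, the terms near $m=L$ dominate geometrically:
\[
\sum_{m\leq L}\frac{d^m}{m}=\frac{d^L}{L}\sum_{k\geq0}\frac{L}{L-k}d^{-k}+\text{(small)} .
\]
Hence
\[
A_d(L)=\frac{d}{d-1}\,\frac{d^L}{L}\,(1+O(1/L)),
\]
and in particular $c_1 d^L/L\leq A_d(L)\leq c_2 d^L/L$ for all $L\geq1$, with constants depending only on $d$. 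Since $A_d$ is increasing, $L_{N,d}$ is characterised by $A_d(L_{N,d}-1)\leq h<A_d(L_{N,d})$. Taking $\log_d$ of the two-sided bounds gives
\[
L-\log_d L=\log_d h+O_d(1)=2\log_d N+O_d(1)
\]
at $L=L_{N,d}$, where I used $\log_d h=2\log_d N+\log_d(d-1)+o(1)$.

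\textbf{Inverting the relation.} This is the only mildly delicate step. The equation has the form $L-\log_d L=X+O(1)$ with $X=2\log_d N$. A first pass gives $L=X+O(\log X)$, so $L\sim X$. Substituting back, $\log_d L=\log_d X+o(1)=\log_d\log_d N+\log_d 2+o(1)$. Therefore
\[
L_{N,d}=2\log_d N+\log_d\log_d N+O_d(1).
\]
Care is needed here so that the error term in $\log_d L$ is uniformly $O(1)$ and does not degrade to $O(\log\log N)$. The bound $L=X+O(\log X)$ ensures $\log L-\log X=O(\log X/X)$, which is small.

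Finally, I note that for large $N$ this asymptotic gives $L_{N,d}\ll N$. So the hypothesis $L_{N,d}\leq N$ holds automatically, and the large-$N$ statement is unconditional.
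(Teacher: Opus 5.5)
Your proposal is correct and follows essentially the same route as the paper. The inequality comes from the slot-counting bound $m_*(\mathcal P)\le L_{N,d}$ combined with Theorem~\ref{thm:first-secondary} once $L_{N,d}\le N$, and the asymptotics come from $A_d(L)=\frac{d^{L+1}}{(d-1)L}\bigl(1+O_d(1/L)\bigr)$ followed by taking $\log_d$ and substituting $L\sim 2\log_d N$. Your version is somewhat more careful than the paper's, via the bracketing $A_d(L_{N,d}-1)\le h<A_d(L_{N,d})$ and the explicit control of the error in the inversion, but the underlying argument is the same.
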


To obtain the large-$N$ estimate, note that
\begin{equation}
        a_d(m)=\frac{d^m}{m}+O_d\!\left(m d^{m/2}\right),
\end{equation}
and that the cumulative sum is dominated by its upper endpoint,
\begin{equation}
        A_d(L)=\frac{d^{L+1}}{(d-1)L}\left(1+O_d\!\left(\frac1L\right)\right).
        \label{eq:cumulative-asymptotic}
\end{equation}
Equating this expression to $h\simeq(d-1)N^2$ yields
\begin{equation}
        d^L\simeq\frac{(d-1)^2}{d}\,L N^2.
\end{equation}
Taking a logarithm once gives $L=2\log_dN+\log_dL+O_d(1)$; substituting the leading estimate $L\sim2\log_dN$ inside the second logarithm gives \eqref{eq:logarithmic-asymptotics}.  Since $\log N\ll N$, the stable-range condition is automatic for sufficiently large $N$ at fixed $d$.

For a concrete example, take two $10\times10$ matrices.  There are $h=N^2+1=101$ primary directions, while necklace counting gives
\begin{equation}
        A_2(8)=93,  \qquad A_2(9)=153.
\end{equation}
Thus every all-single-trace choice of primaries misses a single-trace direction by degree nine, and consequently contains a secondary of degree nine or less.  The first trace relation occurs only at degree eleven.  The secondary sector is therefore forced to appear before any trace relation is available to make two trace monomials linearly dependent.

\subsection{Dependence on the choice of primaries} \label{subsec:choice-dependence}

The bound \eqref{eq:delta-upper-bound} is not universal. A poor choice of primary coordinates can omit a low-degree direction and make the first secondary appear earlier. We will use a simple hypersurface example to illustrate this point.  Consider
\begin{equation}
        R=\frac{\mathbb C[a,b,c]}{(c^2-a^4-b^4)},\qquad \deg a=\deg b=1, \quad \deg c=2.
        \label{eq:hypersurface-example}
\end{equation}
With primaries $a,b$,
\begin{equation}
        R=\mathbb C[a,b]\oplus c\,\mathbb C[a,b],   \label{eq:hypersurface-decomp-one}
\end{equation}
so the first nontrivial secondary is $c$, of degree two.  With primaries $a,c$, the same ring has the decomposition
\begin{equation}
        R=\mathbb C[a,c] \oplus b\,\mathbb C[a,c] \oplus b^2\,\mathbb C[a,c] \oplus b^3\,\mathbb C[a,c],
        \label{eq:hypersurface-decomp-two}
\end{equation}
and the first secondary has degree one.  The ring is unchanged; only the polynomial coordinate algebra has changed.

It is therefore useful to distinguish the basis-dependent quantity $\delta(\mathcal P)$ from the latest onset attainable among all single-trace choices,
\begin{equation}
        \Delta_{N,d} = \sup_{\mathcal P\,\text{all single trace}}\delta(\mathcal P).\label{eq:optimal-secondary-onset}
\end{equation}
Our result implies
\begin{equation}
        \Delta_{N,d}\leq L_{N,d}
\end{equation}
in the stable range.  Whether one can always choose a degree-optimized single-trace system of parameters that saturates this bound is a sharper question, that can not be settled by the existence theorem we proved above.

\subsection{What changes at length \texorpdfstring{$O(\log N)$}{O(log N)}?}\label{subsec:physics-logN}

There are a number of important scales that are relevant to finite $N$ physics. The scale of the trace relations reflects the point at which individual traces of length $N+1$ become linearly dependent. This scale determines the giant graviton bound~\cite{McGreevy:2000cw}, fixing the maximal angular momentum of a giant graviton brane. Another important scale is the scale $N\log(N)$ at which large $N$ factorization fails~\cite{Garner:2014kna}. The overcrowding scale $L_{N,d}=O(\log N)$ is parametrically earlier than these finite-$N$ scales. What is the physical interpretation of the overcrowding scale? The overcrowding effect is collective and it is an algebraic precursor of finite-$N$ physics, distinct from the stringy exclusion mechanism at degree $O(N)$ or failing factorization at $O(N\log N)$. The word ``precursor'' is important. Since the secondary basis depends on the choice of primary coordinates, the appearance of a particular secondary at a particular degree is not, by itself, a basis-independent physical phenomenon. A dynamical interpretation of the overcrowding scale requires additional physical input. We take up this question in the next section.

\section{From Overcrowding to Fast Scrambling}\label{fastscrambling}

To explore the physical significance of the overcrowding scale, an important clue is that the overcrowding scale has precisely the logarithmic dependence on $N$ expected for the scrambling time of a fast scrambler~\cite{Sekino:2008he}. We will argue that this agreement is not accidental: under a natural assumption about operator growth, the counting argument that determines the overcrowding scale becomes the same argument that determines the scrambling time.

Overcrowding is kinematical. It follows from the structure of the multi-matrix invariant ring and is independent of the Hamiltonian. Scrambling, by contrast, is dynamical: a free or integrable Hamiltonian need not scramble information. To connect the two, we must restrict attention to a chaotic matrix model whose Hamiltonian is constructed from traces of words of bounded degree $r$.

Time evolution is generated by repeated commutators with the Hamiltonian. Since $H$ has bounded degree, a single commutator can increase the word length of an operator by only a bounded amount. Schematically,
\begin{equation}
    \deg[H,\cO]\leq \deg\cO+r-2.
\end{equation}
It follows that
\begin{equation}
    \deg\left(\operatorname{ad}_H^n\cO\right)\lesssim L_0+n(r-2),
\end{equation}
where $L_0$ is the initial word length. The Heisenberg expansion
\begin{equation}
    \cO(t)    =    \sum_{n=0}^{\infty}    \frac{(it)^n}{n!}{\rm ad}_H^n\cO
\end{equation}
therefore admits an operator front whose maximal word length grows at most linearly with time. We will assume that the dynamics saturates this bound, so that the characteristic operator length grows ballistically,
\begin{equation}
    L(t)\simeq v_{\rm op}t.    \label{eq:ballistic-growth}
\end{equation}
As discussed below, this is a reasonable assumption~\cite{Lieb:1972wy,Hastings:2005pr,Hastings:2010vzr,Roberts:2014isa} achieved by many systems. The counting results of the previous section then imply that the number of single-trace words available behind this front, denoted $K(t)$, grows exponentially,
\begin{equation}
    K(t)    \sim d^{L(t)}    \sim e^{\lambda_{\rm op}t},    \qquad    \lambda_{\rm op}=v_{\rm op}\log d.
    \label{eq:operator-spreading}
\end{equation}
For bosonic matrix models, making this statement precise requires control over operator norms and the energy sector under consideration. Equation~\eqref{eq:ballistic-growth} is therefore a dynamical assumption rather than a consequence of invariant theory alone.

The primary invariants provide the independent perturbative degrees of freedom. Their number is
\begin{equation}
    N_{\rm pert}=h=1+(d-1)N^2.
\end{equation}
Suppose that a new particle is created by acting with a simple single-trace operator. Initially, the information distinguishing the perturbed state from the original state is localized in only one, or a few, perturbative directions. Under chaotic evolution, the operator spreads through an increasing number of primary directions. We say that the information is scrambled once it can no longer be recovered from any parametrically small subset of the perturbative variables, so that access to an $O(1)$ fraction of the $N_{\rm pert}\sim N^2$ directions is required.

If the number of primary directions explored by the perturbation grows as
\begin{equation}
    n_{\rm spread}(t)\sim e^{\lambda_{\rm op}t},
\end{equation}
then the scrambling time is determined by
\begin{equation}
    n_{\rm spread}(t_*)\sim N_{\rm pert}\sim N^2.
\end{equation}
Consequently,
\begin{equation}
    t_*    \sim    \frac{1}{\lambda_{\rm op}}\log N_{\rm pert}    =    \frac{2}{\lambda_{\rm op}}\log N+O(1),
\end{equation}
which is the usual fast-scrambling scale.

The same estimate follows directly from overcrowding. The cumulative number of independent single-trace species through length $L$ behaves as
\begin{equation}
    A_d(L)\sim \frac{d^{L+1}}{(d-1)L},
\end{equation}
and the overcrowding scale is defined by
\begin{equation}
    A_d(L_{N,d})\sim N_{\rm pert}.
\end{equation}
Before this scale is reached, the growing operator explores only a parametrically small fraction of the perturbative directions. At $L=L_{N,d}$, the number of available single-trace directions becomes comparable to the complete set of primary invariants. The information carried by the added particle has then spread across an $O(1)$ fraction of the perturbative sector.

Combining this observation with the ballistic-growth assumption gives
\begin{equation}
    t_*\simeq\frac{L_{N,d}}{v_{\rm op}}=\frac{1}{v_{\rm op}}\left(2\log_d N+\log_d\log_d N+O_d(1) \right).
\end{equation}
Using $\lambda_{\rm op}=v_{\rm op}\log d$, this may be written as
\begin{equation}
    t_* = \frac{2\log N+\log\log N+O(1)}{\lambda_{\rm op}}.
\end{equation}
Thus the derivation of the scrambling time is almost identical to the derivation of the overcrowding scale. Both ask when an exponentially proliferating family of single-trace operators becomes comparable in number to the $O(N^2)$ independent perturbative degrees of freedom:
\begin{equation}
    \begin{aligned}
        \text{overcrowding:}\qquad&
        A_d(L_{N,d})\sim N_{\rm pert},
        \\
        \text{scrambling:}\qquad&
        n_{\rm spread}(t_*)\sim N_{\rm pert},
        \\
        \text{ballistic growth:}\qquad&
        L(t_*)\sim L_{N,d}.
    \end{aligned}
\end{equation}
Under these assumptions, overcrowding gives a microscopic algebraic description of the onset of scrambling: it is the point at which the growing operator has enough distinct single-trace components to explore the full perturbative sector.

The ballistic-growth assumption is motivated by studies of local chaotic systems. In particular, \cite{Roberts:2014isa} finds linear growth of the spatial support, and of the associated Pauli-string length, of precursor operators in a chaotic spin chain. Trace-word length in matrix quantum mechanics is not identical to spatial support, but the underlying mechanism is analogous: a bounded-degree Hamiltonian changes the word length by only a bounded amount in each nested commutator. It is therefore natural to expect a ballistic front in trace-word length in a dynamical matrix model, although this remains a dynamical assumption that should ultimately be tested in explicit examples.

Two final qualifications are important. First, the primary invariants are algebraically independent coordinates, but they are not automatically orthogonal subsystems or qubits. The statement that recovery requires access to an $O(1)$ fraction of the primaries therefore needs a dynamical notion of distinguishability. A natural choice is the inner product defined by matrix-model correlation functions. In the free large-$N$ theory, one may choose a basis of single-trace directions that is orthogonal to leading order, and we assume that interactions do not qualitatively invalidate this organization over the timescales of interest. Finally, the algebraic result by itself does not prove scrambling; it identifies the finite perturbative capacity that chaotic dynamics must fill. The additional dynamical claim is that operator evolution explores this space sufficiently democratically that saturation of the primary sector coincides with the loss of recoverability from every parametrically small primary subalgebra. Poetically, overcrowding is the algebraic shadow cast by fast scrambling.

\section{A toy model with \texorpdfstring{$\mathbb Z_2$}{Z2} gauge symmetry}\label{sec:z2}

The discussion of the secondary invariants given above has been algebraic. To understand what secondary invariants mean physically, we consider a deliberately simple toy model comprising of two primary invariants and a single secondary invariant. By studying this model we aim to isolate the kinds of effects we might look for in a genuine matrix model quantum mechanics.

Consider two real variables $x,y$ with a $\mathbb Z_2$ gauge identification
\begin{equation}
        (x,y)\sim(-x,-y).
\end{equation}
The gauge-invariant polynomials are generated by
\begin{equation}
        u=x^2,\qquad w=y^2,\qquad v=xy,   \label{eq:z2-invariants}
\end{equation}
subject to the relation, $v^2=uw$. Thus the invariant ring is
\begin{equation}
        R=\frac{\mathbb C[u,w,v]}{(v^2-uw)}.
\end{equation}
If we choose $u$ and $w$ as primary invariants, then
\begin{equation}
        R=\mathbb C[u,w]\oplus v\,\mathbb C[u,w].        \label{eq:z2-hironaka}
\end{equation}
The two secondary basis elements are $1$ and $v$. The primary data $u,w$ determines the magnitudes $|x|$ and $|y|$, but not the relative sign of $x$ and $y$.  For example,
\begin{equation}
        (x,y)=(1,1),\qquad (x,y)=(1,-1)
\end{equation}
have the same primary data $u=w=1$, but different values of the secondary,
\begin{equation}
        v=+1,\qquad v=-1.
\end{equation}
These two configurations are not related by a gauge transformation, which flips both signs simultaneously.  \emph{The secondary $v$ records a genuine piece of gauge-invariant information that is invisible to the chosen primaries.}  Geometrically, the map from invariant space to the primary-coordinate plane is a two-sheeted cover,
\begin{equation}
        v=\pm\sqrt{uw}. \label{eq:z2-two-sheets}
\end{equation}
The primary invariants describe the continuous base, while the non-trivial secondary invariant distinguishes the two sheets above a generic point of that base.

At this point we have described the configuration space of the toy model. Let us now use this configuration space for a 0-dimensional system, i.e. one whose path integral is an ordinary integral, and exhibit the same distinction at the semiclassical level.  Consider the zero-dimensional model with action given by
\begin{eqnarray}
        S_h(x,y)&=&N\left[\frac{\lambda}{4}(x^2-1)^2+\frac{\lambda}{4}(y^2-1)^2-hxy\right]\nonumber\\
                    &=&N\left[\frac{\lambda}{4}(u-1)^2+\frac{\lambda}{4}(w-1)^2-hv\right].
        \label{eq:z2-zero-dimensional-action}
\end{eqnarray}
We interpret $N^{-1}$ as $\hbar$, $\lambda$ as a coupling strength and $h$ as an external source that couples to the secondary. We can always write the action in terms of the invariants, as we did above, because the action itself must be gauge invariant. At $h=0$ there are four classical minima, $(x,y)=(\pm1,\pm1)$.  After dividing by the simultaneous sign flip, there are two gauge-inequivalent semiclassical saddles, denoted by $S_\pm$, distinguished by the value of the secondary invariant
\begin{equation}
        S_+:\quad v=+1,\qquad S_-:\quad v=-1.
\end{equation}
The primary data are the same on both saddles: $u=w=1$. Consequently, any observable built only from the primaries has the same leading semiclassical value in the two sectors,
\begin{equation}
\langle F(u,w)\rangle_\pm=\frac{1}{Z_\pm}\int_{S_{\pm}} dx\, dy F(x^2,y^2)e^{-S_0(x,y)}
\end{equation}
where
\begin{equation}
Z_{\pm}=\int_{S_{\pm}} dx\, dy e^{-S_0(x,y)}
\end{equation}
Thus, any observable depending only on the primary invariants is unable to distinguish the two sectors. On the other hand, the secondary invariant does distinguish sectors
\begin{equation}
   \langle v\rangle_+=1+O(N^{-1}),\qquad \langle v\rangle_-=-1+O(N^{-1}). \label{eq:z2-secondary-expectation}
\end{equation}
The source $h$ couples directly to the non-trivial secondary invariant.  For a fixed nonzero source the positions of the two saddles are shifted.  In the same-sign sector $S_+$, plugging $x=y=a_+$ into the equations minimizing the action, we find the position of the saddle is
\begin{equation}
        u_+=w_+=a_+^2=1+{h\over\lambda}, \qquad v_+=a_+^2=1+{h\over\lambda}, \label{eq:z2-shifted-same-sign-saddle}
\end{equation}
while in the opposite-sign sector $S_-$, writing $x=-y=a_-$ gives
\begin{equation}
        u_-=w_-=a_-^2=1-{h\over\lambda}, \qquad v_-=-a_-^2=-1+{h\over\lambda}. \label{eq:z2-shifted-opposite-sign-saddle}
\end{equation}
The two nonzero stationary points exist for $|h|<\lambda$, and both are local minima for sufficiently small $|h|$ (in particular, for $|h|<\lambda/2$).  Their classical action densities are
\begin{equation}
        S(h)\big|_{S_+}=-h-{h^2\over2\lambda}, \qquad S(h)\big|_{S_-}=h-{h^2\over2\lambda}.
        \label{eq:z2-shifted-saddle-actions}
\end{equation}
Thus, up to the fluctuation determinants, the two-saddle approximation gives
\begin{equation}
        Z(h)=\int\, dx\, dy e^{-S_h(x,y)}
        \simeq e^{Nh^2/(2\lambda)}\left(Z_+(h)e^{Nh}+Z_-(h)e^{-Nh}\right).\label{eq:z2-shifted-partition-function}
\end{equation}
where $Z_\pm(h)$ contain the Gaussian fluctuation determinant and higher-order perturbative corrections around the corresponding saddle. For a source that tends to zero at large $N$, the two fluctuation determinants agree at leading order,
\begin{equation}
        Z_+(h)=Z_-(h)+O(h),
\end{equation}
and the sector weights reduce to the familiar $\cosh(Nh)$ form.  Keeping the displacement of the saddles, one obtains
\begin{equation}
        \langle v\rangle_h = {h\over\lambda} + \tanh(Nh) + O(N^{-1}), \label{eq:z2-corrected-v-expectation}
\end{equation}
and
\begin{equation}
 \langle u\rangle_h=\langle w\rangle_h=1+{h\over\lambda}\tanh(Nh)+O(N^{-1}).\label{eq:z2-corrected-primary-expectations}
\end{equation}
Consequently, for fixed $h$ the primary observables are shifted by an amount of order $h/\lambda$. Now consider a situation in which the source $h_N$ vanishes as $N$ grows, but it satisfies
\begin{equation}
        h_N\longrightarrow0,  \qquad Nh_N\longrightarrow\pm\infty. \label{eq:z2-double-scaling-source}
\end{equation}
For example, we can take $h_N=cN^{-\alpha}$, $c\neq0$ and $0<\alpha<1$. With this source, at large $N$ we find
\begin{equation}
\langle v\rangle_{h_N}\longrightarrow {\rm sign}(c),\qquad \langle u\rangle_{h_N}, \langle w\rangle_{h_N}
\longrightarrow1.   \label{eq:z2-double-scaling-selection}
\end{equation}
This source selects a secondary sector, while its effect on the primary saddle positions disappears. 

This analysis gives the secondary invariant a concrete physical meaning: it distinguishes semiclassical sectors with identical primary data. At finite $N$ and $h=0$, the partition function sums equally over both sectors, so $\langle v\rangle=0$ even though $v$ is order one within each sector. A source $h$ biases their relative weights by $e^{2Nh}$, making $Nh$ the relevant parameter. In the double-scaling limit $h\to 0$ with $Nh\to\infty$, an asymptotically vanishing source selects a definite secondary sector without shifting the limiting primary configuration. Thus the secondary supplies the missing sector label while remaining invisible to leading-order primary observables.

The algebraic structure of the problem suggests that there may be a characteristic fluctuation signal. In addition to the usual fluctuations within each perturbative sheet, fluctuations may also arise from averaging over localized sectors. At $h=0$, the finite-$N$ integral sums symmetrically over the two sectors.  Therefore
\begin{equation}
        \langle v\rangle=0, \qquad \langle v^2\rangle\simeq1, \qquad {\rm Var}(v)\simeq1+O(N^{-1}).
        \label{eq:z2-large-variance}
\end{equation}
This order-one variance is not the ordinary width of a perturbative saddle, of order $N^{-1}$.  It arises because the path integral sums over two sharply localized sectors whose values of $v$ differ by two.  In the two-saddle approximation, the contribution due to an unresolved choice of sector is
\begin{equation}
        {\rm Var}_{\rm inter}(v)={\rm sech}^2(Nh).  \label{eq:z2-inter-sector-variance}
\end{equation}
The full variance also contains the ordinary fluctuation within each saddle, and hence
\begin{equation}
        {\rm Var}(v)={\rm sech}^2(Nh)+O(N^{-1}), \label{eq:z2-suppressed-variance}
\end{equation}
with source-dependent corrections to the $O(N^{-1})$ term when $h$ is held fixed.  Once a sector has been selected, the first term is exponentially suppressed,
\begin{equation}
        {\rm sech}^2(Nh)  \sim4e^{-2N|h|},
\end{equation}
and the ordinary intra-saddle variance of order $N^{-1}$ remains. An order-one variance that disappears upon sector selection is a signature of unresolved secondary-sector data.

The same distinction between sectors becomes nonperturbative in quantum mechanics.  Promote the toy model to quantum mechanical theory with action
\begin{equation}
S=N\int d\tau\left[T-V\right]\quad T=\frac12\dot x^2+\frac12\dot y^2\quad V=\frac{\lambda}{4}(x^2-1)^2+\frac{\lambda}{4}(y^2-1)^2.    \label{eq:z2-qm-action}
\end{equation}
The classical minima of the theory lie at the points $x=\pm1$ and $y=\pm1$. The values of the primaries at the minima is $u=1=v$. Taking into account the gauge symmetry, there are two inequivalent classical minima, again distinguished by their value of the secondary $v=+1$ and $v=-1$. To tunnel between them, one variable ($x$ or $y$) must change sign. The effects of tunneling are described by instanton solutions of the Euclidean equations of motion
\begin{equation}
S=N\int d\tau\left[T+V\right]
\end{equation}
Keeping, as an example, $x=1$, the exact instanton solution associated to tunnelling between $y=\pm 1$ is
\begin{equation}
        y_{\rm inst}(\tau)=\tanh\left[\sqrt\frac{\lambda}{2}\,(\tau-\tau_0)\right].
\end{equation}
The action of this instanton is
\begin{equation}
        S_{\rm inst}=N\int_{-1}^{1}dy\sqrt{2\,\frac{\lambda}{4}(y^2-1)^2}=\frac{2\sqrt{2\lambda}}{3}\,N,
        \label{eq:z2-instanton-action}
\end{equation}
so that the transition amplitude between the two secondary sectors behaves as
\begin{equation}
        \langle -|e^{-HT}|+\rangle \sim A\exp\left[-\frac{2\sqrt{2\lambda}}{3}\,N\right],
        \label{eq:z2-tunneling}
\end{equation}
up to the fluctuation prefactor $A$ which contains the Gaussian fluctuation determinant and higher-order perturbative corrections. In this toy model quantum mechanics the distinction measured by the secondary is invisible to every finite order in perturbation theory around a single saddle, mixing between sectors is exponentially suppressed at large $N$ and it can not be expanded as a series in $N^{-1}$.

The main achievement of the toy model has been to engineer situations in which the sectors associated with the secondaries really do correspond to distinct semiclassical sectors. This demonstrates that the algebraic construction furnished by the Hironaka decomposition can, under some circumstances, acquire dynamical significance. After this warm up we now turn from this toy example to the corresponding questions in simple matrix model quantum mechanics.

\section{Three matrix model}\label{sec:three-matrix}

Our goal in this section is to make the preceding discussion concrete in a genuine matrix model. We study a system of $d=3$ $N\times N$ matrices  with $N=2$. Although we are most interested in large values of $N$, the $N=2$ case is non-trivial and simple enough that explicit computations are possible. The invariant theory description of the relevant ring has been developed in~\cite{deMelloKoch:2025ngs,deMelloKochRodrigues:2026}. We will give some simple examples of actions that again realize the secondary sectors as distinct semiclassical sectors.

\subsection{Semiclassical sectors}

We have something very precise in mind when we talk about a semiclassical sector. A semiclassical sector is a region of configuration space organized around a classical saddle such that
\begin{itemize}
\item Perturbation theory can be developed locally around that saddle.
\item The saddle is separated from other relevant saddles by an action or energy barrier.
\item Transitions between the corresponding localized states are nonperturbative in the semiclassical parameter, typically of order $e^{-S_{\rm inst}/\hbar}$.
\end{itemize}
Clean examples of semiclassical sectors are provided, for example, by distinct degenerate or nearly degenerate classical minima, as given in Fig 1. below, More generally, they maybe true ground states, metastable ground states or more general long-lived classical configurations. We can also speak of semiclassical sectors associated with nontrivial classical solutions, but if such a solution has unstable directions, its more accurate to refer to it as a semiclassical \emph{saddle} rather than a \emph{sector}.
\bigskip

\bigskip

\bigskip

\begin{tikzpicture}
\begin{axis}[
    width=12cm,
    height=7.8cm,
    domain=-3:3,
    samples=500,
    xmin=-3,
    xmax=3,
    ymin=0,
    ymax=35,
    axis lines=left,
    xlabel={$x$},
    ylabel={$V(x)$},
    xlabel style={font=\large},
    ylabel style={font=\large},
    tick label style={font=\small},
    xtick={-3,-2,-1,0,1,2,3},
    ytick={0,10,20,30},
    clip=true,
]

\addplot[
    very thick,
    smooth,
]
{(x^2-1)^2*(0.7*x^2-4)^2};

\addplot[
    thin,
    opacity=0.35,
]
coordinates {(0,0) (0,85)};
\end{axis}
\end{tikzpicture}
\smallskip
\begin{quote}
{\bf Figure 1. A four well potential:} A particle can be trapped in any given well. Perturbing it slightly causes the particle to oscillate about the bottom of the well and is described by using perturbation theory. Transitions between wells can be accomplished by quantum tunnelling. These effects are  typically of order $e^{-S_{\rm inst}/\hbar}$ with $S_{\rm inst}$ the instanton action.
\end{quote}
\bigskip

\subsection{Invariant Ring}

Following~\cite{deMelloKochRodrigues:2026} write the three Hermitian matrices as
\begin{equation}
        X^a={t_a\over 2}\,\mathbf 1+\vec x_a\cdot \vec \sigma,  \qquad t_a=\Tr X^a, \qquad
        \vec x_a\in \mathbb R^3, \qquad  a=1,2,3, \label{eq:pauli-decomposition-three}
\end{equation}
where $\vec\sigma$ are the Pauli matrices.  Under simultaneous conjugation by $U(2)$, the trace modes $t_a$ are invariant, while the traceless parts $\vec x_a$ transform by a common $SO(3)$ rotation.  Thus the invariant theory of three Hermitian $2\times 2$ matrices is the invariant theory of three vectors in $\mathbb R^3$, together with the three trace variables. The natural even invariants are the entries of the Gram matrix\footnote{A careful reader may complain that, with nonzero trace modes, $Q_{ab}=\frac12{\rm Tr}(X^aX^b)-\frac14{\rm Tr}(X^a){\rm Tr}(X^b)$, so that $Q_{ab}$ is not itself a single trace. This is not a substantive issue because $\mathbb C[t_a,Q_{ab}]=\mathbb C[t_a,{\rm Tr}(X^aX^b)]$, and the two hsops are related by a triangular polynomial transformation.}
\begin{equation}
        Q_{ab}=\vec x_a\cdot \vec x_b,  \qquad  a,b=1,2,3.  \label{eq:three-gram-invariants}
\end{equation}
These determine the lengths of the three vectors and their pairwise angles. They do not determine the orientation of the ordered triple. The missing orientation data is measured by the scalar triple product
\begin{equation}
        \chi=T_{123}=\vec x_1\cdot(\vec x_2\times \vec x_3). \label{eq:three-secondary-chi}
\end{equation}
In matrix language this is the imaginary part of a cubic trace,
\begin{equation}
        \operatorname{Im}\Tr(X^1X^2X^3)=2\chi . \label{eq:triple-product-matrix-form}
\end{equation}
The invariant $\chi$ changes sign under orientation reversal, while the $Q_{ab}$ do not. The Hilbert series is of the invariant ring is
\begin{equation}
        H_{2,3}(q)={1+q^3\over (1-q)^3(1-q^2)^6}. \label{eq:hilbert-three-matrix}
\end{equation}
An obvious and convenient choice of primary invariants $p_k$ $k=1,2,\cdots,9$ is
\begin{equation}
       p_k=\{t_1,t_2,t_3,Q_{11},Q_{22},Q_{33},Q_{12},Q_{13},Q_{23}\}. \label{eq:three-matrix-primaries}
\end{equation}
There is then one nontrivial secondary, namely $\chi$, and the Hironaka decomposition takes the simple form
\begin{equation}
        R_{2,3}={\cal P}\oplus \chi {\cal P},  \qquad {\cal P}=\mathbb C[t_a,Q_{ab}]. \label{eq:three-hironaka}
\end{equation}
The secondary is not a new free oscillator.  Its square is fixed by the primary data
\begin{equation}
\chi^2=\det G, \qquad G=\begin{pmatrix} Q_{11}&Q_{12}&Q_{13}\\ Q_{12}&Q_{22}&Q_{23}\\ Q_{13}&Q_{23}&Q_{33}
   \end{pmatrix}. \label{eq:chi-square-detG}
\end{equation}
For generic primary data with $\det G>0$, there are two secondary sectors
\begin{equation}
        \chi=+\sqrt{\det G}, \qquad \chi=-\sqrt{\det G}. \label{eq:three-two-sheets}
\end{equation}
The two sectors have identical primary data and differ only in the value of the orientation, determined by the secondary invariant. The two sectors meet on the discriminant locus
\begin{equation}
        \det G=0, \qquad \chi=0, \label{eq:three-discriminant}
\end{equation}
where the three vectors become linearly dependent.  Thus we again see that the secondary invariant records a discrete piece of gauge invariant data over a generic primary point.

\subsection{Matrix integrals and secondary expectation values}

We have seen that fixing the primaries does not fix the invariant configuration. A finite set of secondary choices remain. In this section we want to explore the consequences of this structure at the level of a zero-dimensional system. Towards this end consider a zero-dimensional matrix integral
\begin{equation}
        Z=\int \prod_{a=1}^3 dM_a\, e^{-S(M_1,M_2,M_3)} .    \label{eq:three-matrix-integral}
\end{equation}
After changing to invariant variables, the integral is a sum over the two secondary sectors, which are supported by the same primary data
\begin{equation}
        Z=\sum_{\epsilon=\pm 1}\int_{\mathcal D} d^9p_k\, J(p_k)\, \exp\left[-S\big(p_k,\epsilon\sqrt{\Delta(p_k)}\big)\right],\qquad\Delta(p_k)=\det G(p_k).  \label{eq:three-invariant-integral}
\end{equation}
Here $p_k$ denotes the nine primary invariants, $J(p_k)$ is the Jacobian associated to the change from matrix elements to the primary invariants, and $\mathcal D$ is the physical domain in which $G(p_k)$ is positive semidefinite. See~\cite{deMelloKochRodrigues:2026} for further details.

If the action is even under orientation reversal
\begin{equation}
        S(p,\chi)=S(p,-\chi), \label{eq:three-parity-even-action}
\end{equation}
then every observable built only from primaries is insensitive to the relative sign of $\chi$. In this case the two sheets contribute equally and supply a factor of two.  On the other hand, the secondary changes sign between sheets so that
\begin{equation}
        \langle\chi\rangle =0, \qquad  \langle \chi F(p)\rangle=0  \label{eq:odd-secondary-vanishes}
\end{equation}
for any function $F$ of the primaries.  The vanishing is a cancellation between the two orientation sheets.  The even moments do not vanish.  Since $\chi^2=\det G$,
\begin{equation}
        \langle\chi^2 F(p)\rangle=\langle\det G\, F(p)\rangle .  \label{eq:even-secondary-moments-three}
\end{equation}
Thus, a parity-symmetric state may have a vanishing one-point function for the secondary while retaining nonzero secondary fluctuations. Intuitively one might expect the secondary to exhibit an anomalously large variance, receiving contributions both from ordinary perturbative fluctuations within each sheet and from fluctuations between distinct sheets. A simple Gaussian example shows, however, that this expectation need not be realized. Specialize to the action
\begin{equation}
        S_0={\mu\over 2}\sum_{a=1}^3 \Tr M_a^2 ={\mu\over 2}\sum_{a=1}^3\left({t_a^2\over 2}+2|\vec x_a|^2\right).  \label{eq:three-gaussian-action}
\end{equation}
Since the vector components are independent Gaussians, we immediately find
\begin{equation}
\langle x_{a,i}x_{b,j}\rangle={1\over 2\mu}\delta_{ab}\delta_{ij}.\label{eq:gaussian-vector-variance}
\end{equation}
This implies the following expectation values for the Gram invariants,
\begin{equation}
        \langle Q_{aa}\rangle =\frac{3}{2\mu}, \qquad {\rm Var}(Q_{aa})=\frac{6}{4\mu^2},\qquad 
        \langle Q_{ab}\rangle=0, \qquad {\rm Var}(Q_{ab})=\frac{3}{4\mu^2}.  \label{eq:Qab-gaussian}
\end{equation}
The secondary obeys
\begin{equation}
        \langle \chi\rangle=0, \qquad \langle \chi^2\rangle=\frac{6}{8\mu^3}. \label{eq:chi-gaussian-second-moment}
\end{equation}
With the dimensionless normalization
\begin{equation}
        \widehat \chi = {\chi\over \big(\langle Q_{11}\rangle \langle Q_{22}\rangle \langle Q_{33}\rangle\big)^{1/2}},
        \label{eq:chi-normalized}
\end{equation}
one finds
\begin{equation}
    \langle \widehat\chi^2\rangle={2\over 9},\qquad {\rm Var}(\hat\chi)={2\over 9}. \label{eq:chi-normalized-variance}
\end{equation}
This is an ordinary normalized fluctuation.  The Gaussian model therefore shows that the algebraic existence of a secondary is not, by itself, a dynamical statement about large fluctuations.  Large secondary variance can arise only when the primary data are sharply localized while both sheets remain populated.  In that case the variance is dominated by the separation between the two sheets rather than by fluctuations within either sheet.

\subsection{Orientation instantons}

We now want to engineer a quantum mechanical system that realizes the secondary sectors as distinct semiclassical sectors. This is straightforward: any potential that fixes the primary invariants and is independent of the secondary, will realize the potential in both orientation sheets. For simplicity, specialize to traceless matrices $X^a=\vec x_a\cdot\vec\sigma$, with dynamics described by the action
\begin{equation}
   S=\int dt \left[{1\over 4}\sum_{a=1}^3 \Tr (\dot X^a\dot X^a)-V(X)\right],\label{eq:three-qm-action-real}
\end{equation}
with the following singe trace potential
\begin{eqnarray}
V&=&\frac{\lambda}{8}\sum_{a=1}^{3}\Tr\left(X^aX^a-v^2{\bf 1}\right)^2+\frac{\kappa}{16}\sum_{a<b}\Tr\left(\{X^a,X^b\}^2\right)\cr\cr
&=&{\lambda\over 4}\sum_{a=1}^3 \left(|\vec x_a|^2-v^2\right)^2+{\kappa\over 2}\sum_{a<b}
(\vec x_a\cdot \vec x_b)^2 . \label{eq:three-qm-potential}
\end{eqnarray}
The rewriting on the second line makes it manifest that this potential is only a function of the primary invariants. Since the two terms being summed are positive, the minima of the potential take the form
\begin{equation}
        |\vec x_a|=v,\qquad \vec x_a\cdot \vec x_b=0 \quad (a\neq b). \label{eq:three-qm-minima}
\end{equation}
Modulo common $SO(3)$ rotations, there are two minima, at $\chi=+v^3$ and $\chi=-v^3$. Thus, we have naturally obtained two classical minima. We would now like to explore the possibility that there is tunnelling between these two ground states. Towards that end, we need to find classical solutions of the Euclidean action
\begin{equation}
   S_E=\int d\tau \left[{1\over 4}\sum_{a=1}^3 \Tr (\dot X^a\dot X^a)+V(X)\right],\label{eq:three-qm-action}
\end{equation}
To construct an instanton solution, note that an explicit path connecting them is
\begin{equation}
        \vec x_1=v\vec e_1,  \qquad  \vec x_2=v\vec e_2, \qquad \vec x_3=q(\tau)\vec e_3 .
        \label{eq:three-instanton-ansatz}
\end{equation}
Plugging this ansatz into the equations of motion obtained from the Euclidean action \eqref{eq:three-qm-action}, we obtain the following equation of motion
\begin{equation}
{\partial^2\over\partial\tau^2}q=\lambda(q^2-v^2)q\,.
\end{equation}
The exact instanton solution is
\begin{equation}
        q_{\rm inst}(\tau) = v\tanh\!\left[ v\sqrt{\lambda\over 2}\,(\tau-\tau_0) \right]. \label{eq:three-instanton}
\end{equation}
Along this trajectory
\begin{equation}
 \chi_{\rm inst}(\tau)=\vec{x}_1\cdot(\vec{x}_2\times\vec{x}_3)=v^2 q_{\rm inst}(\tau), \label{eq:chi-instanton-profile}
\end{equation}
so the instanton interpolates from $-v^3$ to $+v^3$.  At its center $\chi=0$, and the trajectory passes through the rank-deficient locus where the three vectors become linearly dependent.  The instanton action is
\begin{equation}
        S_{\rm inst} = \int_{-v}^{v}dq\, \sqrt{2\,{\lambda\over 4}(q^2-v^2)^2} = {2\sqrt{2}\over 3}\,v^3\sqrt{\lambda}.
        \label{eq:three-instanton-action}
\end{equation}
Thus the algebraic branch locus has a direct dynamical role: it is the region through which the nonperturbative trajectory passes. Let $|+\rangle$ and $|-\rangle$ denote states localized near the two orientation minima.  The dominant contributions to transitions between these two sectors
\begin{equation}
\langle\pm |e^{-HT}|\mp\rangle\sim Ae^{-S_{\rm inst}}
\end{equation}
where $A$ is obtained from Gaussian fluctuations and higher loop corrections. In the semiclassical limit where the instanton action $S_{\rm inst}$ becomes large, mixing between these sectors is highly suppressed and we obtain genuine semiclassical sectors. The secondary invariant is the natural transition operator between the two sectors.

Given how naturally the secondary sectors emerge as genuine semiclassical sectors of an explicit dynamical system, we regard this as evidence that the algebraic sector structure may often be realized dynamically, as envisaged in~\cite{deMelloKoch:2025ngs}.

\section{Four matrix model}\label{sec:four-matrix}

The three-matrix model studied in the previous section provides a natural dynamical realization of secondary sectors as distinct semiclassical branches whose mixing is suppressed in the classical limit. That example, however, contains only a single secondary invariant, which records the orientation. We now turn to the $N=2$, $d=4$ model, whose structure is qualitatively richer. In this case, the secondary invariants arise both as orientation labels and as labels distinguishing the different solutions of an algebraic equation over the primary data. This more closely reflects the generic structure of the finite-$N$ invariant ring. In this case we will again see that it is not difficult to find dynamical systems that realize the secondary sectors as genuine semiclassical sectors. The invariant ring for this system has been studied in~\cite{deMelloKoch:2025ngs,deMelloKochRodrigues:2026}. 

Following~\cite{deMelloKochRodrigues:2026} we again expand
\begin{equation}
   X_a={t_a\over 2}\,\mathbf 1+\vec x_a\cdot\vec\sigma,\qquad a=1,\ldots,4. \label{eq:pauli-decomposition-four}
\end{equation}
There are four linear primary invariants $t_a$. The traceless parts of the matrices give four vectors in $\mathbb R^3$. Parallel to the analysis of the last section, their even invariants are the ten Gram-matrix entries
\begin{equation}
        Q_{ab}=\vec x_a\cdot \vec x_b, \qquad 1\leq a\leq b\leq 4, \label{eq:four-gram-invariants}
\end{equation}
and the single secondary of the last section is generalized to four orientation-sensitive cubic invariants, given by the four scalar triple products
\begin{equation}
        T_{123},\quad T_{124},\quad T_{134},\quad T_{234},\qquad
        T_{abc}=\vec x_a\cdot(\vec x_b\times \vec x_c). \label{eq:four-triple-products}
\end{equation}
It is not difficult to check that these cubic invariants share one overall orientation. The Hilbert series of the invariant ring is~\cite{deMelloKoch:2025ngs}
\begin{equation}
        H_{2,4}(q)={1+q^2+4q^3+q^4+q^6\over(1-q)^4(1-q^2)^9}. \label{eq:hilbert-four-matrix}
\end{equation}
It exhibits thirteen primary invariants: the four traces and nine quadratic invariants. Since there is a total of ten distinct quadratic invariants in the Gram entries, the overcrowding scale is saturated and a single quadratic invariant is realized in the secondary module. We need to make a concrete choice for the primary invariants and then determine the resulting secondary invariants.

\subsection{A quadratic primary choice}\label{subsec:four-primary-choice}

A valid choice for the set of primary invariants is given by
\begin{equation}
p_1=t_1,\quad p_2=t_2,\quad p_3=t_3,\quad p_4=t_4
\end{equation}
as well as 
\begin{eqnarray}
 &p_5=Q_{11},\quad p_6=Q_{22},\quad p_7=Q_{33},\quad p_8=Q_{44},\nonumber\\
 &p_9=Q_{13},\quad p_{10}=Q_{14},\quad p_{11}=Q_{23},\quad p_{12}=Q_{24},\nonumber \\
 &p_{13}=Q_{12}-Q_{34}. \label{eq:four-quadratic-primaries}
\end{eqnarray}
The remaining independent quadratic invariant $ \Sigma=\frac12(Q_{12}+Q_{34})$ must be included\footnote{Recall that all words with a length $\le N=2$ are linearly independent and must be included amongst the generating invariants.} in the set of generating invariants as a secondary invariant. To see that this is a valid choice, notice that if all nine $p_i$ vanish, we have
\begin{equation}
 G=\Sigma H, \qquad
 H=\begin{pmatrix}
 0&1&0&0\\
 1&0&0&0\\
 0&0&0&1\\
 0&0&1&0
 \end{pmatrix}.
 \label{eq:four-zero-gram}
\end{equation}
where $G$ is the Gram matrix $(G)_{ab}=Q_{ab}$. Because $\det H=1$ and a Gram matrix of four vectors in $\mathbb C^3$ has $\det G=0$, we find
\begin{equation}
 0=\det G=\Sigma^4, \label{eq:four-zero-sigma}
\end{equation}
so $\Sigma=0$.  Thus the simultaneous vanishing of the nine proposed primaries leaves only the origin in the Gram-invariant space, which confirms that these nine quadratic invariants form a valid homogeneous system of parameters.

The fact that a Gram matrix of four vectors in $\mathbb C^3$ has $\det G=0$ implies the following algebraic equation
\begin{equation}
\det G=\Sigma^4+f_1(p_k)\Sigma^3+f_2(p_k)\Sigma^2+f_3(p_k)\Sigma+f_4(p_k)=0\label{secondaryalgebraic}
\end{equation}
where the coefficients $f_a(p_k)$ are polynomials of degree $2a$ in the primary invariants. Thus, after choosing a definite set of primary data there are four possible values for the secondary $\Sigma$. In terms of $\Sigma$ the secondary invariants can be chosen as follows
\begin{eqnarray}
 &\eta_0=1,\quad \eta_1=\Sigma,\quad \eta_2=\Sigma^2,\quad \eta_3=\Sigma^3,\nonumber\\
 &\eta_4=T_{123},\qquad \eta_5=T_{124},\qquad \eta_6=T_{134},\quad \eta_7=T_{234},\label{secondariesValid}
 \end{eqnarray}
 For a given choice of primary data, we still need to choose one of four possible Gram roots\footnote{It is possible that not all roots are valid choices -- see the next section.} of \eqref{secondaryalgebraic} to determine $\eta_1$, $\eta_2$ and $\eta_3$. Once a set of primary data has been chosen, the remaining ambiguity in the cubic secondaries is the common reversal
\begin{equation}
 T_I\longmapsto-T_I \qquad\text{for all }I. \label{eq:four-orientation-reversal}
\end{equation}
so there are two choices for the remaining secondaries $\eta_4,\eta_5,\eta_6$ and $\eta_7$. It is immediately evident that the secondary multiplication law \eqref{eq:secondary-multiplication} is obeyed. Concretely, from \eqref{secondaryalgebraic} we immediately have
 \begin{eqnarray}
&\eta_3\eta_3=(2f_1f_2-f_1^3-f_3)\eta_3+(f_1f_3+f_2^2-f_1^2f_2-f_4)\eta_2\nonumber\\
&\qquad+(f_1f_4+f_2f_3-f_1^2f_3)\eta_1+(f_2f_4-f_1^2f_4)\eta_0,\nonumber\\
 &\eta_2\eta_3=(f_1^2-f_2)\eta_3+(f_1f_2-f_3)\eta_2+(f_1f_3-f_4)\eta_1+f_1f_4\eta_0,\nonumber\\
 & \eta_1\eta_1=\eta_2,\qquad\eta_1\eta_2=\eta_3,\qquad\eta_2^2=\eta_1\eta_3=-f_1\eta_3-f_2\eta_2-f_3\eta_1-f_4\eta_0,
 \end{eqnarray}
 where the $f_a$ are read from \eqref{secondaryalgebraic}. Products of cubic secondaries are fixed by Gram minors.  For example, the product law for $\eta_4^2$ follows from
\begin{eqnarray}
\eta_4^2&=& T_{123}^2=
 \det\begin{pmatrix}
 Q_{11}&Q_{12}&Q_{13}\\
 Q_{12}&Q_{22}&Q_{23}\\
 Q_{13}&Q_{23}&Q_{33}
 \end{pmatrix}=
 \det\begin{pmatrix}
 p_5&\frac12 p_{13}+\eta_1&p_9\\
 \frac12 p_{13}+\eta_1&p_6&p_{11}\\
 p_9&p_{11}&p_7
 \end{pmatrix},\nonumber\\
 &=& (p_{11}p_{13}p_9-p_{11}^2 p_5-\frac14 p_{13}^2 p_7+p_5 p_6 p_7-p_6 p_9^2)\eta_0+(2p_{11}p_9-p_{13}p_7) \eta_1-p_7 \eta_2 .\nonumber
\end{eqnarray}
More generally, any product between $\eta_4,\eta_5,\eta_6$ and $\eta_7$ can be reduced in eactly the same way using
\begin{equation}
 T_{ijk}T_{\ell mn} =\det(Q_{ab})_{a\in\{i,j,k\},\,b\in\{\ell,m,n\}}.\label{eq:four-triple-products}
\end{equation}
To complete the discussion, we need to consider the product of a Gram root and a cubic invariant, for example, $\eta_1\eta_4$.
 The product in question is therefore
\begin{equation}
 \eta_1\eta_4=\Sigma T_{123}.
\end{equation}
We now reduce this expression to the same secondary basis with coefficients in the primary ring. The key observation is that because the four vectors $\vec x_1,\ldots,\vec x_4$ lie in three dimensions, they satisfy the linear-dependence identity
\begin{equation}
 T_{234}\vec x_1 -T_{134}\vec x_2 +T_{124}\vec x_3 -T_{123}\vec x_4 =0. \label{eq:vector-identity}
\end{equation}
Taking the scalar product of equation~\eqref{eq:vector-identity} with $\vec x_3$, and performing some trivial manipulations gives
\begin{equation}
 \left(\Sigma-\frac{p_{13}}{2}\right)T_{123} =p_9T_{234}-p_{11}T_{134}+p_7T_{124}.
\end{equation}
This immediately implies
\begin{equation}
 \eta_1\eta_4=\frac{p_{13}}{2}\eta_4+p_7\eta_5-p_{11}\eta_6+p_9\eta_7. \label{eq:final-eta-product}
\end{equation}
To argue that the choice \eqref{secondariesValid} is valid, note that the proposed nine quadratic invariants, together with the four linear traces, form an hsop. Since the ring is Cohen–Macaulay, it is free over this polynomial algebra. Multiplying the Hilbert series by the hsop denominator gives $1+q^2+4q^3+q^4+q^6$. The proposed secondary invariants have precisely the degrees: $0,2,3,3,3,3,4,6$. It remains only to note that their residue classes are nonzero and linearly independent in $R/(p_1,…,p_{13})$.

For generic complex primary data, $\det G=0$ is quartic in $\Sigma$.  Each Gram root is doubled by the common orientation sign, giving eight complex points. Thus the invariant ring is described as an eight sheeted cover of the primary base with different sheets resolved by the eight secondaries. On the Hermitian slice however, $G$ must be positive semidefinite. Generically, at most two Gram roots are physical, giving at most four real points after orientation doubling.  The degree-eight statement therefore refers to the complexified fiber, not to eight Hermitian sheets.

The fact that not all eight sheets intersect the original Hermitian matrix integration contour does not imply that they are physically irrelevant. Instanton configurations themselves generally arise only after analytic continuation of the real time path integral, yet they can have genuine physical effects. By the same logic, sheets that do not lie on the original integration contour may still influence the theory through its complexified saddle-point structure.

\subsection{An explicit physical four-point fiber}\label{subsec:four-explicit-fiber}

In this section, we fix the primary data and determine the residual discrete information encoded by the secondary invariants, which explicitly illustrates the arguments developed in the previous section. Fix $Q_{11}=1$, $Q_{22}=2$, $Q_{33}=1$, $Q_{44}=2$, with $Q_{13}=Q_{14}=Q_{23}=0$, $Q_{24}=1$ and $Q_{12}-Q_{34}=0$. Then $Q_{12}=Q_{34}=\Sigma$ and
\begin{equation}
 G(\Sigma)=\begin{pmatrix}
 1&\Sigma&0&0\\
 \Sigma&2&0&1\\
 0&0&1&\Sigma\\
 0&1&\Sigma&2
 \end{pmatrix}.
 \label{eq:four-explicit-gram}
\end{equation}
Its determinant is
\begin{equation}
 \det G(\Sigma)=(\Sigma^2-1)(\Sigma^2-3). \label{eq:four-det-factor}
\end{equation}
The four complex roots are $\Sigma=\pm1,\pm\sqrt3$. Any such root is physical if the resulting Gram matrix is positive semidefinite. If a matrix is positive semidefinite, all of its principal minors are positive semidefinite. Consequently, the roots $\pm\sqrt3$ are not physical because
\begin{equation}
 \det\begin{pmatrix}Q_{11}&Q_{12}\\Q_{12}&Q_{22}\end{pmatrix} =2-\Sigma^2=-1.
 \label{eq:four-unphysical-minor}
\end{equation}
By contrast, $\Sigma=\pm1$ give rank-three positive-semidefinite Gram matrices, so that these roots are physical. We can choose an explicit set of vectors $\vec{x}_a$ in \eqref{eq:pauli-decomposition-four} that generate the primary data. Of course any gauge transformation of these representatives is an equally valid choice. Our choice
\begin{equation}
 \vec x_1^{(s,\epsilon)}=s\vec e_1, \qquad \vec x_2^{(s,\epsilon)}=\vec e_1+\epsilon\vec e_3, \qquad
 \vec x_3^{(s,\epsilon)}=s\vec e_2, \qquad \vec x_4^{(s,\epsilon)}=\vec e_2+\epsilon\vec e_3.
 \label{eq:four-representatives}
\end{equation}
is labelled by the pair $s,\epsilon$ which both take value $\pm1$. With these explicit vectors it is simple to verify that
\begin{equation}
 Q_{12}=Q_{34}=s, \qquad \Sigma=s. \label{eq:four-s-label}
\end{equation}
Changing $\epsilon$ leaves the complete Gram matrix unchanged and reverses all four cubic invariants.  The physical fiber is therefore
\begin{equation}
 (s,\epsilon)\in\{+1,-1\}\times\{+1,-1\}. \label{eq:four-physical-fiber}
\end{equation}
The four configurations $\vec x_1^{(s,\epsilon)},\vec x_2^{(s,\epsilon)},\vec x_3^{(s,\epsilon)},\vec x_4^{(s,\epsilon)}$ are gauge inequivalent.

\subsection{Gram-sheet and orientation instantons}\label{subsec:four-instantons}

We again would like to engineer a quantum mechanical system that realizes the secondary sectors as distinct semiclassical sectors. For the four matrix model we would like the dynamical system to exhibit two types of instantons: orientation-changing and Gram-sheet changing instantons. It again proves to be straightforward to find a dynamical system, that realizes both types of instantons. For simplicity we again consider  traceless matrices whose dynamics is governed by the action
\begin{equation}
 S_E=\int\dd\tau\left[ \frac14\sum_{a=1}^{4}\Tr\dot X_a^2+V\right]. \label{eq:four-euclidean-action}
\end{equation}
We choose the positive primary-only single trace potential
\begin{eqnarray}
V&=&{\rm Tr}\Bigg[
\frac{\lambda}{8}\left(X_1^2-\mathbf 1\right)^2+\frac{\lambda}{8}\left(X_3^2-\mathbf 1\right)^2+\frac{\lambda}{16}\left(\{X_2,X_4\}-\frac{1}{2}\left(X_2^2+X_4^2\right)\right)^2\nonumber\\[2mm]
&&\,\,\,\,+\frac{\lambda}{32}\left(X_2^2+X_4^2-4\mathbf 1\right)^2+\frac{\eta}{4}\left(X_2^2-X_4^2\right)^2
+\frac{\kappa}{16}\Big(\{X_1,X_2\}-\{X_3,X_4\}\Big)^2\nonumber\\[2mm]
&&\,\,\,\,+\frac{\rho}{16}\Big(\{X_1,X_3\}^2+\{X_1,X_4\}^2+\{X_2,X_3\}^2\Big)\Bigg]\nonumber\\[2mm]
&=&\frac\lambda4\left[(|\vec{x}_1|^2-1)^2+(|\vec{x}_3|^2-1)^2\right] +\frac\lambda2\left(B-\frac A4\right)^2
 +\frac\lambda{16}A^2 +\frac\eta2(|\vec{x}_2|^2-|\vec{x}_4|^2)^2\nonumber\\[2mm]
 &&\,\,\,\, +\frac\kappa2(\vec{x}_1\cdot\vec{x}_2-\vec{x}_3\cdot\vec{x}_4)^2
 +\frac\rho2((\vec{x}_1\cdot\vec{x}_3)^2+(\vec{x}_1\cdot\vec{x}_4)^2+(\vec{x}_2\cdot\vec{x}_3)^2), \label{eq:four-primary-potential}
\end{eqnarray}
with positive couplings and where we have defined
\begin{equation}
 A=|\vec{x}_2|^2+|\vec{x}_4|^2-4, \qquad B=\vec{x}_2\cdot\vec{x}_4-1, \label{eq:four-A-B}
\end{equation}
Notice that we the potential depends only on $\vec{x}_1\cdot\vec{x}_2-\vec{x}_3\cdot\vec{x}_4$ and not on $\vec{x}_1\cdot\vec{x}_2$ and $\vec{x}_3\cdot\vec{x}_4$ separately, ensuring it does not depend on any secondary invariants. Although this potential looks somewhat complicated, it has been engineered to be positive and to admit
\begin{equation}
|\vec{x}_1|=|\vec{x}_3|=\vec{x}_2\cdot\vec{x}_4=1,\qquad |\vec{x}_2|=|\vec{x}_4|=\sqrt{2},\qquad
\vec{x}_1\cdot\vec{x}_3=\vec{x}_1\cdot\vec{x}_4=\vec{x}_2\cdot\vec{x}_3=0
\end{equation}
as its physical minima. These are precisely the four points~\eqref{eq:four-physical-fiber} we studied in the last subsection. Consequently we know that this physical minimum appears in four sheets of the physical fiber. Thus, there are four classical minima parametrized exactly as in \eqref{eq:four-representatives}. We will now search for instanton transitions between them. Instanton solutions solve the equations of motion of the Euclidean action 
\begin{equation}
 S_E=\int\dd\tau\left[ \frac14\sum_{a=1}^{4}\Tr\dot X_a^2+V\right]. \label{eq:four-euclidean-action}
\end{equation}
For the Gram-sheet instanton, fix $\epsilon$ and consider the following path that takes us from one Gram root to another
\begin{equation}
 \vec x_1=q(\tau)\vec e_1, \quad \vec x_2=\vec e_1+\epsilon\vec e_3, \quad \vec x_3=q(\tau)\vec e_2, \quad
 \vec x_4=\vec e_2+\epsilon\vec e_3. \label{eq:four-gram-ansatz}
\end{equation}
With this choice for the $\vec{x}_a$, we have $\Sigma=q$. Plugging this ansatz into the equations of motion, we find the exact solution is
\begin{equation}
 q_{\rm Gram}(\tau)= \tanh\left[\sqrt{\frac\lambda2}(\tau-\tau_0)\right], \label{eq:four-gram-instanton}
\end{equation}
The action of the instanton is given by
\begin{equation}
 S_{\rm Gram}=\frac{4\sqrt{2\lambda}}3. \label{eq:four-gram-action}
\end{equation}
It changes $s$ at fixed $\epsilon$ and passes through a rank-two Gram matrix.

For the orientation instanton, a path leading from one orientation to the opposite is given by
\begin{equation}
 \vec x_a(\tau)=z(\tau)\vec x_a^{(s,+)}. \label{eq:four-orientation-ansatz}
\end{equation}
The endpoints $z=\pm1$ have the same Gram matrix, while
\begin{equation}
 T_{abc}(\tau)=z(\tau)^3T_{abc}^{(s,+)} \label{eq:four-orientation-triples}
\end{equation}
reverses sign. Again plugging this ansatz into the equations of motion we find that the exact solution is
\begin{equation}
 z_{\rm orient}(\tau)= \tanh\left[\sqrt{\frac\lambda2}(\tau-\tau_0)\right]\label{eq:four-orientation-instanton}
\end{equation}
and the corresponding action is
\begin{equation}
 S_{\rm orient}=4\sqrt{2\lambda}. \label{eq:four-orientation-action}
\end{equation}
The model therefore realizes two different finite labels: $\Sigma$ distinguishes physical Gram roots, while the cubic secondaries distinguish the orientation doubling. Once again this gives a dynamical system that realizes the interpretation suggested by the algebraic structure of the invariant ring.

\subsection{Secondary Invariants as semiclassical sectors}

Our goal was to give evidence supporting the interpretation of the secondary invariants as labels for semiclassical sectors. Have we done that? Fixing primaries leaves a finite set of distinct gauge-invariant configurations with identical primary data. For a concrete illustration, see Fig. 2 below.

\bigskip

\begin{center}
\begin{tikzpicture}[>=Latex,scale=1.5]
  \draw[->] (-3.4,-1.5)--(3.4,-1.5) node[right] {primary space};
  \foreach \y/\c in {1.1/blue,0.35/blue,-0.4/blue,-1.15/blue}{
    \draw[thick,\c] (-2.8,\y) .. controls (-1.2,\y+0.25) and (1.2,\y-0.25) .. (2.8,\y);}
  \draw[dashed] (0,-1.5)--(0,1.35);
  \node[right] at (0.15,0.75) {finite fiber};
  \node[below] at (0,-1.5) {fixed $p$};
\end{tikzpicture}
\end{center}
\begin{quote}
{\bf Figure 2. Secondary Sheets over the Primary Space} The primary data generally do not fix the secondary invariants, so the invariant configuration space forms a finite collection of sheets over the primary-coordinate space, distinguished by the secondary values. If the potential depends only on the primaries and has a minimum at a given primary point, that minimum is replicated on each dynamically accessible sheet. See the dashed line. The Hironaka decomposition therefore provides a natural algebraic mechanism for multiple degenerate ground states.
\end{quote}
\medskip

The figure above is an illustration of the matrix model example we have described in this section. We found the ground state appeared in all four sheets of the invariant space.  To show that these fiber points behave as distinct semiclassical sectors, we have completed the following natural dynamical tests

\begin{itemize}
\item \emph{Classical realization:} All four points are a minimum of the potential.

\item \emph{Stability:} We have checked, with the Hessian, that each is a local minimum. There are no zero energy valleys which would allow a particle to escape. A continuous path between two distinct ground states must leave the minimum locus and cross a region of larger potential or Euclidean action.

\item \emph{Degeneracy in primary observables:} We have verified that the configurations share identical primary data but differ in one or more secondary invariants. Their coexistence at fixed primary data is therefore a direct manifestation of the secondary structure.

\item \emph{Instanton interpolation:} We have explicitly constructed finite-action Euclidean solutions interpolating between configurations. Their actions determine exponentially small mixing between the localized semiclassical states.
\end{itemize}
Together, these tests give a concrete meaning to the statement that secondary invariants are associated with semiclassical sectors.

It is natural to ask how far this physical interpretation extends beyond the $N=2$ examples studied here, since the more interesting regime is large but finite $N$. Two ingredients are essential: the potential should fix the primary data, and it should be independent of the secondary invariants. As $N$ increases, the overcrowding scale moves to larger degree, so a bounded-degree potential will typically be constructed entirely from low-degree single traces and hence from primary invariants alone. If such a potential has isolated classical minima in the primary space, each dynamically accessible point in the finite fiber above a minimum provides a corresponding semiclassical ground state. Under these mild assumptions, the picture developed above should persist at larger finite $N$, with secondary invariants labeling distinct semiclassical sectors.

\section{Discussion}\label{sec:discussion}

The central result of this paper is that the finite-$N$ invariant algebra develops a non-free organization parametrically before the first trace identities appear. We proved that the primary invariants may always be chosen entirely from the single-trace sector. In the stable range, the first single-trace direction omitted from such a primary algebra survives in the quotient by the primaries and therefore becomes the first nontrivial secondary invariant. Combining this result with necklace counting gives
\begin{equation}
 \delta(\mathcal P) \leq L_{N,d}= 2\log_d N+\log_d\log_d N+O_d(1)
\end{equation}
for every all-single-trace homogeneous system of parameters $\mathcal P$. The secondary sector is therefore forced to appear at degree no greater than $O(\log N)$, even though all trace monomials at this degree remain linearly independent.

The overcrowding scale is a capacity bound rather than a determination of the exact onset. If $\Delta_{N,d}$ denotes the latest possible first secondary among all all-single-trace choices of primaries, our result establishes
\begin{equation}
 \Delta_{N,d}\leq L_{N,d}.
\end{equation}
Large-$N$ freeness suggests that the unavoidable onset cannot remain at fixed degree as $N$ grows, so that one expects
\begin{equation}
 1\ll \Delta_{N,d}\leq L_{N,d}=O(\log N),
\end{equation}
although a quantitative lower bound remains to be proved. The meaning of this logarithmic scale is global. It does not imply that an individual trace of length $O(\log N)$ has become dependent, nor that the planar expansion of a fixed low-degree observable has failed. Rather, there are too many independent single-trace directions for all of them to serve simultaneously as freely generated coordinates of the complete finite-$N$ invariant algebra.

This delayed onset is a distinctive feature of matrix models rather than a generic property of Hironaka decompositions. In vector models, and in bosonic systems with an $S_N$ permutation symmetry, nontrivial secondary data do appear at much lower degree. For example, for $N$ bosons in two dimensions, separate power sums in the two coordinate directions determine the unordered sets of $x$- and $y$-coordinates but not their pairing. Mixed invariants therefore supply secondary data already at low degree, and the generic fiber contains $N!$ sheets corresponding to the possible pairings. What is special in the matrix problem is the exponential proliferation of cyclic words with their length, together with only $O(N^2)$ algebraically independent continuous coordinates.
It is this specifically matrix-theoretic competition that produces the parametrically delayed scale $L_{N,d}=O(\log N)$.

The coincidence of this scale with the fast-scrambling scale is suggestive. If operator word length grows ballistically and chaotic evolution explores the available single-trace directions sufficiently democratically, the operator reaches the overcrowding scale precisely when the number of accessible single-trace directions becomes comparable to the $O(N^2)$ perturbative coordinates. Under these dynamical assumptions, the counting argument for overcrowding becomes the same counting argument that gives a fast scrambling time of order $\log N$. Overcrowding does not by itself prove scrambling: it is a kinematical statement about the finite capacity of the primary algebra. The proposed connection requires chaotic dynamics, ballistic operator growth and an appropriate dynamical notion of distinguishability among the primary directions.

The low-rank examples give a complementary interpretation of the secondary module. Fixing the primary invariants can leave a finite set of gauge-inequivalent configurations distinguished only by secondary data. A potential constructed entirely from the primaries will then localize on several dynamically accessible points of this finite fiber. In this way the Hironaka decomposition provides a natural mechanism to generate multiple ground states in matrix quantum mechanics. Sources coupled to a secondary can select among these points, fluctuations across different sheets can be distinguished from ordinary perturbative fluctuations within one sheet, and finite-action instantons can produce exponentially suppressed mixing between the corresponding semiclassical states. These examples show that secondary invariants can label dynamical semiclassical sectors that are invisible to perturbation theory about a single sheet.

This interpretation is dynamically realizable but not universal. The existence of a secondary invariant does not by itself guarantee distinct classical minima, a large barrier, an anomalous variance or an instanton connecting different fiber points. Nor need an interacting Hamiltonian preserve the individual primary towers in the Hironaka decomposition. An important problem is therefore to identify signatures of the finite fiber that are independent of the chosen primary coordinates. It would also be valuable to determine the optimized onset $\Delta_{N,d}$, to test
ballistic growth directly in matrix quantum mechanics, and to study whether the same sector structure is realized by physically natural Hamiltonians, in particular those with a commutator-squared Yang--Mills potential.

The heavy sector built from $O(N^2)$ elementary fields is a particularly natural setting in which to pursue these questions. In this regime the proliferation of trace species competes directly with the $O(N^2)$ continuous capacity of the invariant space, while the finite secondary module contains exponentially many elements. This scaling is
suggestive of the entropy carried by matrix degrees of freedom and may be relevant to the reorganization of the Hilbert space associated with black-hole physics.

The broader conclusion is that the Hironaka decomposition is more than an efficient method for solving finite-$N$ trace relations. It separates the exact invariant algebra into a freely generated coordinate algebra and the finite algebraic information required to complete it. The theorem proved here ensures that the freely generated coordinates may be chosen to be single traces, preserving their usual interpretation as perturbative single-particle observables, while the secondary module records the information absent from the large-$N$ Fock description. Overcrowding quantifies when this finite completion becomes globally unavoidable. The Hironaka decomposition therefore provides a natural language for finite-$N$ collective field theory: it retains the perturbative single-trace description where it is valid while exposing the additional algebraic structure needed to describe the exact finite-$N$ Hilbert space.

\section*{Acknowledgments}

The work of R.d.M.K. and A.R. is supported by a start-up research fund of Huzhou Normal University, a Zhejiang Province talent award and a Changjiang Scholar award.

\appendix

\section{Technical ingredients in the single-trace construction}
\label{app:technical-single-trace}

\subsection{The criterion for choosing primary invariants}
\label{app:hsop-criterion}

There is a simple criterion that tells us when a set of homogeneous invariants can be used as primary invariants.  Let
\begin{equation}
R=\bigoplus_{m\geq 0}R_m
\end{equation}
be the graded ring of invariant polynomials, with $R_0=\mathbb C$, and let $h=\dim R$ be the number of independent continuous gauge-invariant directions.  $h$ is the number of independent coordinates needed to describe the continuous part of the space of invariant data.

Suppose we have chosen $h$ homogeneous invariants $p_1,\ldots,p_h$. These can be taken as primary invariants if they detect all continuous directions. Equivalently, once the $p_a$ are fixed, there should be no remaining continuous freedom in the invariant data.  There may still be a finite number of algebraic possibilities. This finite data is encoded by the secondary invariants.

To test a proposed set of primary invariants, look at the common zero locus of the $p_a$.  Assume that we have managed to argue that
\begin{equation}
p_1=p_2=\cdots=p_h=0
\end{equation}
forces the invariant data to sit at the invariant origin, namely the point where all positive-degree invariants vanish.  In algebraic terms this means that the ideal generated by the $p_a$ has the same zero locus as the ideal of all positive-degree invariants\footnote{If $I\subset R$ is an ideal, its radical is defined by
\begin{equation}
\sqrt I=\left\{f\in R ;\middle|; f^k\in I\text{ for some positive integer } k\right\}.
\end{equation}
Thus $f$ belongs to $\sqrt I$ if some power of $f$ belongs to $I$.}
\begin{equation}
\sqrt{(p_1,\ldots,p_h)}=R_+, \qquad R_+=\bigoplus_{m>0}R_m .\label{eq:radical-origin}
\end{equation}
Thus the equations $p_a=0$ leave no nontrivial continuous direction through the origin.  Since the $p_a$ are homogeneous, the same conclusion holds after fixing generic values of the $p_a$: their values remove all continuous freedom, up to a finite ambiguity.

We can make this more explicit.  Consider the quotient $R/(p_1,\ldots,p_h)$. In this quotient we set all proposed primary invariants to zero.  By the assumption above, this kills all continuous directions in the invariant space.  Therefore only a finite-dimensional vector space of residual invariants remains.  Choose homogeneous representatives $q_1,\ldots,q_s$ for a basis of this quotient.  These representatives are the finite set of possible ``remainders'' left after reducing by the $p_a$.

We claim that every invariant in $R$ can be written in the form
\begin{equation}
F=\sum_{\alpha=1}^s f_\alpha(p_1,\ldots,p_h)\, q_\alpha ,\label{eq:module-expansion}
\end{equation}
where the $f_\alpha$ are polynomials.  This is the algebraic version of the statement that the $p_a$ describe the continuous directions, while the $q_\alpha$ describe the finite residual data.

The proof is by degree.  Take a homogeneous invariant $F$.  Its image in the quotient $R/(p_1,\ldots,p_h)$ can be expressed as a linear combination of the chosen representatives $q_\alpha$.  Subtracting this combination from $F$, the remainder lies in the ideal generated by the $p_a$:
\begin{equation}
F-\sum_\alpha c_\alpha q_\alpha=\sum_{a=1}^h p_a F_a .
\end{equation}
Since each $p_a$ has positive degree, the coefficients $F_a$ have lower degree than $F$.  Repeating the same argument for the $F_a$, and continuing by induction on the degree, gives the expansion \eqref{eq:module-expansion}.  Thus $R$ is a finite module over the polynomial ring generated by the $p_a$
\begin{equation}
\mathbb C[p_1,\ldots,p_h]\subset R .
\end{equation}

It remains to check that the $p_a$ are algebraically independent.  The ring $R$ has dimension $h$, and we have just shown that $R$ is finite over $\mathbb C[p_1,\ldots,p_h]$.  A finite extension does not change the number of continuous directions.  Hence
\begin{equation}
\dim \mathbb C[p_1,\ldots,p_h]=\dim R=h .
\end{equation}
But a ring generated by $h$ elements has dimension $h$ only when those $h$ generators obey no algebraic relation among themselves.  Therefore $p_1,\ldots,p_h$ are algebraically independent.

We have shown that the $p_a$ are algebraically independent and that $R$ is finite over the polynomial ring $\mathbb C[p_1,\ldots,p_h]$.  This is precisely the statement that $p_1,\ldots,p_h$ form a homogeneous system of parameters.  In the language used in the main text, they may be chosen as the primary invariants.  The remaining finite set of representatives $q_\alpha$ are the secondary invariants.

\subsection{Proof of the common-degree lemma}\label{app:common-degree}

Represent the finitely many invariant-data assignments $q_j$ by matrix tuples $A^{(j)}=(A_1^{(j)},\ldots,A_d^{(j)})$ in the complexified configuration space.  These representatives need not be Hermitian; the complexification is used only to prove a statement about the same invariant ring that governs the Hermitian model.  Because $q_j$ is not the invariant origin, choose a word $w_j$ of length $\ell_j$ such that
\begin{equation}
        \Tr\left(w_j(A^{(j)})\right)\neq0.\label{eq:detecting-word}
\end{equation}
Set $W_j=w_j(A^{(j)})$.  The matrix $W_j$ is not nilpotent.  Let
\begin{equation}
        L=\operatorname{lcm}(\ell_1,\ldots,\ell_s),\qquad  B_j=W_j^{L/\ell_j}.
\end{equation}
Every $B_j$ remains nonnilpotent, but its trace may vanish by cancellation between eigenvalues.  We now choose one further power that avoids these cancellations simultaneously.

Let $\rho_j>0$ be the largest modulus of an eigenvalue of $B_j$, and write the eigenvalues of modulus $\rho_j$ as $\rho_j e^{i\theta_{j,a}}$.  The finite collection of phases defines a point on a compact torus.  Recurrence of powers on a compact torus gives a sequence $q_\nu\to\infty$ such that
\begin{equation}
        e^{iq_\nu\theta_{j,a}}\longrightarrow1
\end{equation}
simultaneously for every $j$ and $a$.  Along this sequence,
\begin{equation}
        \rho_j^{-q_\nu}\Tr(B_j^{q_\nu}) =\sum_{|\lambda|=\rho_j} \left(\frac{\lambda}{\rho_j}\right)^{q_\nu} +o(1)
        \longrightarrow m_j>0,  \label{eq:dominant-eigenvalue-alignment}
\end{equation}
where $m_j$ is the number of eigenvalues of maximal modulus, counted with multiplicity.  Hence one sufficiently large common value $q=q_\nu$ obeys
\begin{equation}
        \Tr(B_j^q)\neq0 \qquad\text{for every }j.
\end{equation}
Finally define
\begin{equation}
        t_j =\Tr\!\left(w_j^{qL/\ell_j}\right).
\end{equation}
All $t_j$ have the same degree $M=qL$, and
\begin{equation}
        t_j(q_j)=\Tr(B_j^q)\neq0.
\end{equation}
This proves Lemma~\ref{lem:common-degree}.

\end{document}